\newcommand{\intl}[2]{\!\underset{#1}{\overset{#2}{\rotatebox[origin=rc]{15}{\large\ensuremath{\int}}}}}
\newcommand{\im}{\operatorname{Im}}
\newcommand{\codim}{\operatorname{codim}}
\newcommand{\id}{\operatorname{Id}}
\newcommand{\rank}{\operatorname{rank}}
\renewcommand{\alpha}{\alphaup}
\renewcommand{\beta}{\betaup}
\renewcommand{\gamma}{\gammaup}
\renewcommand{\delta}{\deltaup}
\renewcommand{\epsilon}{\varepsilonup}
\renewcommand{\varepsilon}{\epsilonup}
\renewcommand{\zeta}{\zetaup}
\renewcommand{\eta}{\etaup}
\renewcommand{\theta}{\thetaup}
\renewcommand{\vartheta}{\varthetaup}
\renewcommand{\iota}{\iotaup}
\newcommand{\Kappa}{\mathrm{K}}
\renewcommand{\kappa}{\varkappa}
\renewcommand{\lambda}{\lambdaup}
\renewcommand{\mu}{\muup}
\renewcommand{\nu}{\nuup}
\renewcommand{\xi}{\xiup}
\renewcommand{\pi}{\piup}
\newcommand{\Rho}{\mathrm{P}}
\renewcommand{\rho}{\rhoup}
\renewcommand{\varrho}{\varrhoup}
\renewcommand{\sigma}{\sigmaup}
\renewcommand{\varsigma}{\varsigmaup}
\renewcommand{\tau}{\tauup}
\renewcommand{\Upsilon}{\textrm{\greektext U}}
\renewcommand{\upsilon}{\upsilonup}
\renewcommand{\phi}{\upvarphi}
\renewcommand{\varphi}{\phiup}
\renewcommand{\chi}{\chiup}
\renewcommand{\psi}{\textrm{\greektext y}}
\renewcommand{\omega}{\omegaup}
\renewcommand{\mathbb}{\varmathbb}
\tikzset{->-/.style={decoration={
  markings,
  mark=at position #1 with {\arrow{>}}},postaction={decorate}}}
\title{
Allostery and conformational changes upon binding as generic features of proteins: a high-dimension geometrical approach.
}
\author{A.S. Zadorin}
\date{\small\itshape
	Chimie Biologie Innovation, ESPCI Paris, CNRS, PSL University, 75005 Paris, France.\\
	Center for Interdisciplinary Research in Biology (CIRB), Coll\`ege de France, CNRS, INSERM, PSL
	Research University, Paris, France.}
\newtheorem{theorem}{Theorem}
\newtheorem{lemma}{Lemma}
\theoremstyle{definition}
\newtheorem*{definition}{Definition}
\begin{document}

\maketitle

\begin{abstract}

A growing number of experimental evidence shows that it is general for a ligand binding protein to have a potential for allosteric regulation and for 
further evolution. In addition, such proteins generically change their conformation upon binding. O. Rivoire has recently proposed an evolutionary 
scenario that explains these properties as a generic byproduct of selection for exquisite discrimination between very similar ligands. The initial 
claim was supported by two classes of basic examples: continuous protein models with small numbers of degrees of freedom, on which the development of a 
conformational switch was established, and a 2-dimensional spin glass model supporting the rest of the statement. This work aimed to clarify the 
implication of the exquisite discrimination for smooth models with large number of degrees of freedom, the situation closer to real biological systems. 
With the help of differential geometry, jet-space analysis, and transversality theorems, it is shown that the claim holds true for any generic flexible 
system that can be described in terms of smooth manifolds. The result suggests that, indeed, evolutionary solutions to the exquisite discrimination 
problem, if exist, are located near a codimension-1 subspace of the appropriate genotypical space. This constraint, in turn, gives rise to a potential 
for the allosteric regulation of the discrimination \emph{via} generic conformational changes upon binding.
 
\end{abstract}

\section{\large Introduction}

Significant conformational changes in proteins upon their binding to specific ligands are ubiquitous in nature. Their occurrence spans from signaling 
proteins and transcription factors to enzymes. They are observed even outside the realm of proteins: in aptamers and ribozymes. Allosteric regulation, 
the modulation of the primary function by binding of another molecule at a distant site, is often associated with such biopolymers either in an actual 
or in a potential form. There are two common features in all these systems. First, they are flexible. Second, the primary task that all of them solve 
includes a fine discrimination between different but close ligands vs. solvent. Namely, the desirable ligand must be more preferably than the solvent 
with the contrary for undesirable ligands. For signaling proteins it is the ability to distinguish between the specific signal and similar molecules. 
For transcription factors it is the recognition site among similar DNA motifs. For enzymes and ribozymes it is the ability to sufficiently strongly 
bind the substrate and/or the transition state and to release the (often similar) product \cite{Hammes2009}.

The main focus of researchers since the discovery of conformational changes has been on the nature and mechanisms of these changes. Their evolutionary 
origin is usually seen either as a requirement for the function of the protein in question or as a subsequent development of regulation of this 
function. For example, for a signal transduction receptor, a conformational change upon binding is necessary to initiate the signal response pathway 
(be it binding to a DNA site or a transmembrane activation of a signaling cascade). For enzymes, such explanations include the correct positioning of 
aminoacids in the reaction center and creation of the correct environment around the substrate, as well as kinetic control of the reaction rate. For 
the enzymes that are molecular motors, conformational changes are the essence of their function. Finally, the conformational change is seen as an 
adaptation to allosteric regulation of protein function \cite{Hammes2002}. It is important to emphasize that in this view allostery is actively 
selected for and a conformational change serves as a means to fulfill this demand. Such explanations assume some adaptive value of the conformational 
change in light of selection for a complex property of the protein. The conformational changes themselves are understood as highly orchestrated events.

The main two hypothetical mechanisms of the conformational change itself, however, assume it to be generic. These two mechanisms are the hypothesis of 
\emph{induced fit} and the hypothesis of \emph{conformational selection}. In the induced fit scenario, the ligand provokes a conformational change in 
the sufficiently flexible binding protein after an initial weak binding. This results in a strongly bound complex \cite{Koshland1958,Koshland1995}. In 
the conformational selection paradigm, the native state and the conformation of the strongly bound complex exist as possible conformations in the 
population of the free protein under normal conditions. The native state is assumed to reflect the global minimum of the free energy while the other 
state is assumed to be metastable with lower probability in the population. The binding of the ligand change their roles and the other state becomes 
predominant. The conformations themselves, in this scenario, do not significantly change---only their free energy levels change 
\cite{Monod1965,Changeux2011}.

Recently, a different view on the problem of allostery and conformational changes in proteins was introduced by O.~Rivoire in \cite{Rivoire2018}. 
Rivoire noticed that allostery can be a consequence of an existing conformational change in a discriminating protein. Indeed, if a conformational 
change is invelved in an exquisite discrimination, the ability to discriminate can be turned off by blocking the movement itself and thus changing the 
energies of bound states. If the conformation changes far from the initial binding site, as it takes place in sufficiently large conformational change, 
the regulating binding can happen far from the initial binding pocket. This situation is interpreted as allosteric regulation. The conformational 
change in the first place, in this scenario, comes as a byproduct of the selection towards the exquisite discrimination. Thus, a potential for 
allostery emerges from a selection for a much simpler property.

The validity of this scenario was demonstrated in \cite{Rivoire2018} on two types of models: 1) extremely simple elastic network models and 2) a spin 
glass model of proteins. The elastic models treated the protein as a single mass on one or two springs with only one degree of freedom. Possible 
ligands were treated as numerical values on an axis of \emph{environmental variables} that served as an additional force constantly acting on the 
system. In addition, the evolutionary degree of freedom also was considered to be a single continuous variable that imposes another force of the same 
sort. The spin glass model, from the other hand, had multiple configurational and evolutionary degrees of freedom, but a configuration of each 
``aminoacid'' was described only by either ``up'' or ``down'' state. Both these models are strongly simplified descriptions of proteins.

Both models gave the same qualitative result, formulated in \cite{Rivoire2018} as follows. Under the assumption of a system's flexibility, the 
discrimination of particular similar ligands requires the system to be evolutionary finely tuned to respect requirements on free energies of the 
complexes. This constraint causes a generic conformational change upon binding. A hypothesis, tested only on the spin glass model of proteins, was 
formulated that connects a large enough conformational change with the potential for allosteric regulation by involvement of distant parts of the 
protein in the movement. Being involved in keeping a delicate free energy balance, such sites become a potential target for further regulation by 
another ligand, for example. Furthermore, it was shown (on the continuous elastic model) that the conformational change may have a form of continuous 
deformation of the initial state to the final one or these states may be different ones and may even coexist as a global and a local minimum. Thus the 
distinction between the induced fit and the conformational selection becomes moot from this point of view.

However, the simplicity of the models used for illustration of this powerful principle comes with strong limitations that may prevent a direct 
generalization. The drawback of the spin glass model is in its intrinsic discontinuity, and it is difficult to say if the observed effects are related 
to general properties of protein-like systems or to this particularity of the model. This is especially true for the claim about a conformational 
switch, since the behaviour of the system is switch-like at the level of each element from the beginning.

The continuous elastic model suffers from its low dimensionality. It is not clear how the conclusion can be drawn from an example with one physical 
degree of freedom, one scalar phenotypic trait, and a ligand space described by a single number. Furthermore, the particularly simple relation between 
the ligand and the system's potential may turn out to be a very particular case.

In the current work, it is proven that the conclusions of \cite{Rivoire2018} are, indeed, valid, under a certain interpretation, for a much wider class 
of models: continuous systems with any number of degrees of physical freedom, any dimensionality of the phenotypical trait space, and any number of 
parameters describing ligands. In addition, an estimate on the abundance of evolutionary solutions to the exquisite discrimination of particular 
ligands (equivalently, on the required fine tuning of the protein sequence) is derived in terms of the dimensionality of the set in the trait space 
around which the solutions are concentrated. It is also shown that the proposed scenario for the origin of allosteric regulation is plausible in these 
settings, too.

The work is organized in the following way. In Section~\ref{section-physical}, the problem is formulated in terms of physical chemistry. In 
Section~\ref{section-mathematical}, it is translated to a mathematical model. In Section~\ref{section-theorems}, the problem is rigorously formalized 
in the language of differential geometry and three main theorems are stated, constituting the main result about the exquisite discrimination problem, 
the conformational changes, and allostery: Theorem~\ref{main}, \ref{main2}, and \ref{main3}. A biological interpretation and some implications of 
theses results are outlined in Section~\ref{section-discussion}. The final Section~\ref{section-proof} is completely devoted to formal mathematical 
proofs of the main theorems.

\section{\large Physical formulation of the problem}
\label{section-physical}

Following \cite{Rivoire2018}, we assume that evolution of a protein involves three types of variables: 1) physical (conformational) degrees of freedom 
$x$, 2) environmental degrees of freedom $\ell$ that define the surrounding medium, and 3) evolutionary degrees of freedom $a$ associated with the 
protein sequence.

Typically, the variables in $x$ include positions of single atoms or distances between pairs of them and completely describes the shape (conformation) 
of the molecule. Depending on the coarse graining of the model, it may describe mutual orientation of larger portions of the molecule, like individual 
aminoacids. In the latter case, $x$ may also involve angles on top of distances. The variables in $\ell$ contain information about the environment 
around the molecule. In this particular case they are restricted to the identification of a ligand bound to a particular site of the protein or to the 
absence of any such ligand (a molecule of the solvent can be taken as the ligand for this case). Finally, $a$ describes the genetic information 
involved in building the molecule. In the most direct case it is its aminoacid sequence. Alternatively, it can reflect some higher level aggregated 
phenotypical properties of the molecule or its parts that define its behaviour in the selected level of abstraction.

These three types of variables are linked \emph{via} the parametrized potential energy $U(x,\ell,a)$ of the protein, where $\ell$ and $a$ are 
parameters. Thus, a function $U(x,\ell,a)$ describes a family of potential energies $U_{\ell,a}(x)$ with a constant parameter $a$ (it defines the 
protein) and an environment-dependent parameter $\ell$ (it describes how the energy changes with the binding of the ligand $\ell$). At given $\ell$ and 
$a$, the distribution of conformations of the protein is given by the Boltzmann distribution with that potential energy such that the probability of 
conformation $x$ in an ensemble of molecules is $\mathbb P(x \mid \ell,a) = \exp\Big(-\beta\big(U(x,\ell,a) - F(\ell,a)\big)\Big)$, where $F(\ell,a)$ 
is the free energy of the system and $\beta$ is the inverse temperature measured in energy units. Following the treatment of the continuous case in 
\cite{Rivoire2018}, we will only consider the zero temperature limit $\beta \to \infty$. In this case $\mathbb P(x \mid \ell,a)$ degenerates to a 
$\delta$-function at the point (or points) of the global minimum of $U_{\ell,a}$ and $F(a,\ell)$ is equal to this minimum.

For example, in \cite{Rivoire2018}, variables $x$, $\ell$, and $a$ were natural numbers and the considered potentials had the following forms
	\begin{align}
		&U(x,\ell,a) = \frac{1}{2}k(|x| - r)^2 - (\ell - a)x,\notag\\
		&U(x,\ell,a) = \frac{1}{2}k(x - r)^2 - (\ell - a)x,\\
		&U(x,\ell,a) = k\left(\sqrt{x^2 + d^2} - r\right)^2 - (\ell - a)x,\notag
	\end{align}
where $k$, $d$, and $r$ are positive constants.

The \emph{exquisite discrimination problem} is formulated in the following way. Given a desirable ligand $\ell_r$ and an undesirable ligand 
$\ell_\varw$ such that $\ell_r \approx \ell_\varw$, and assuming that the environment defined by the solvent alone is represented by 
$\ell_\varnothing$, find $a$ such that
	\begin{equation}
		F(\ell_r,a) < F(\ell_\varnothing,a) < F(\ell_\varw,a),
	\end{equation}

or, in the zero temperature limit,
	\begin{equation}
		\min U_{\ell_r,\,a} < \min U_{\ell_\varnothing,\,a} < \min U_{\ell_\varw,\,a}.
	\label{U-condition}
	\end{equation}

\noindent This condition is schematically shown on the left part of Figure~\ref{scheme}.

\section{\large Mathematical formulation of the problem}
\label{section-mathematical}

For the sake of brevity, we will use the term ``protein'' for ``a system that needs to do an exquisite discrimination'', although, of course, the 
general argument is not restricted only to proteins. We will assume that a protein is characterized by three types of variables: conformational 
variables $x$ that take value in the configuration space $X$, environmental variables $\ell$ taking value in $L$ (the space of possible ligands), and 
evolutionary variables $a$ taking value in $A$ (the space of protein sequences, the phenotypical trait space, etc.).

The main claim of the current work can informally be expressed in the following statement. \emph{Under general assumptions, a sufficiently flexible 
system that solves the exquisite discrimination problem generically experiences a large conformational change upon binding to its substrate. The 
ability to discriminate requires evolutionary fine tuning and possible solutions are concentrated near a codimension-1 hypersurface of the phenotypical 
trait space $A$. The combination of the fine tuning and the conformational change makes the discrimination ability sensitive to binding of other 
ligands to distant sites.}

To make these statement precise we will fix the following assumptions. Spaces $X$, $L$, and $A$ are assumed to be smooth ($C^\infty$) compact 
manifolds. The physical behaviour of a protein is defined by an energy function $U\colon M \to \mathbb R$, where
	\begin{equation}
		M = X\times L\times A.
	\end{equation}

\noindent The product is understood in the category of smooth manifolds so $M$ is assumed to be endowed with the structure of a $C^\infty$ manifold. 
$U$ is assumed to be smooth, as well. $U$ is understood as a family of potentials on $X$ with parameters from $L$ and $A$. In the zero temperature 
approximation, the configuration of a protein with sequence $a$ that corresponds to a ligand $\ell$ is considered to be $x$ that minimizes 
$U(x,\ell,a)$ with constant $\ell$ and $a$ globally in $X$. We also assume that $X$ represents only the \emph{shape} of the protein and the degrees of 
freedom of the whole molecule (translational and rotational) are already excluded as well as that the dimensionality of $X$ corresponds to the number 
of the leftover \emph{independent} degrees of freedom.

Let us discuss these assumptions. The representation of the configuration space of a physical system by a manifold is very natural and does not require 
any special explanation. The compactness of $X$ is a technical requirement, which is not very restrictive. Indeed, if the configurations are given by 
the collection of pairwise distances between elements (aminoacids, nucleotides) with hard links between neighbours in the primary sequence, as it is 
commonly assumed in physical models of macromolecules, the configuration space naturally has a form of closed (without border) compact multiply 
connected submanifold of some Euclidean space. If, instead, no restrictions are applied, but the interatomic interactions are described by some 
pairwise potentials, the actual configuration space is some Euclidean space, which is not compact. However, as the interaction potential either 
increases with acceleration (as in elastic network models) or monotonously increase to some finite limit (as in real molecules) as some atom approaches 
an infinite distance from the rest of the structure, we are not interested at the behaviour of $U$ in the neighbourhood of infinity. In this case, it 
is enough to consider some smaller, compact, subspace of the initial configuration space. Finally, when there is a finite potential energy in the bound 
state of the protein(as in real molecules), the initial space $\mathbb R^n$ can be compactified to the projective space $P\mathbb R^n$. The compactness 
of $L$, as well as its smooth nature, is just a convenient hypothesis as there are no good models for this space. The sequence space $A$ is not a 
smooth manifold in nature. It is rather a nondirected graph with high symmetry. However, we assume that it can be well approximated by some smooth 
manifold with a smooth function $U(x,\ell,a)$. For example, for binary sequences of length $n$ the sequence space is an $n$-dimensional hypercube. It 
can be approximated by an $(n-1)$-dimensional hypersphere.

We assume that for the protein to perform its function, it must bind the correct ligand, described by the environment $\ell_r$, stronger than the 
solvent, $\ell_\varnothing$, and the incorrect ligand, $\ell_\varw$, must be bound weaker than the solvent. $\ell_r$ and $\ell_\varw$ are assumed to be 
close in $L$ ($L$, being described by some physical parameters, is usually metrizable, so one may assume that the distance between the states given by 
some metrics on this space is much smaller than between either of them and $\ell_\varnothing$).

As the first step, we will solve a simpler problem. Given two ligands $\ell_0,\ell_1\in L$, we want to find such phenotypes $a$ and such configurations 
$x$ that the protein bound to either of the ligands has the same minimal energy $U$. This point of view ignores the inevitably small differences 
between the minimal energy levels of the complexes with $\ell_r$ and $\ell_\varw$. Moreover, the values of $\ell_\varw$ and $\ell_r$ are considered to 
be indistinguishable, too. Therefore, we assume $\ell_\varw = \ell_r = \ell_1$. As a consequence, the minimal energy that corresponds to the binding to 
$\ell_\varnothing = \ell_0$ is equal to that of $\ell_1$, as we assume it to be between $\ell_\varw$ and $\ell_r$ (see Figure~\ref{scheme}).

This simplification can be regarded as a coarse grained view on the problem, where any slightly different points in any of the spaces are seen as 
equal. In this picture, a significant difference between real configurations $x_1$ and $x_2$ means simply $x_1 \neq x_2$. Such abstraction allows a 
rigorous mathematical treatment as it can be recast to questions about intersections of submanifolds of appropriate manifolds. Such questions, taking 
into account the notion of general position (generic case), result in exact answers in qualitative terms. The backinterpretation is, however, much less 
rigorous. We will address the issues related to it in the end of the article.

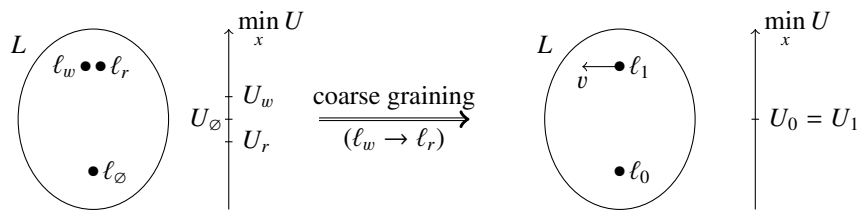
\begin{figure}
\begin{center}
		\begin{tikzpicture}
			\begin{scope}[xshift=0,yshift=0]
				\draw (0,0) ellipse (1cm and 1.2cm);
				\node at (-1,1) {$L$};
				\coordinate (o) at (0,-0.7);
				\coordinate (w) at (-0.1,0.7);
				\coordinate (r) at (0.1,0.7);
				\node at (o) {$\bullet$};
				\node[right] at (o) {$\ell_\varnothing$};
				\node at (w) {$\bullet$};
				\node[left] at (w) {$\ell_\varw$};
				\node at (r) {$\bullet$};
				\node[right] at (r) {$\ell_r$};
				\def\x{1.8};\def\dx{0.05};
				\draw[->] (\x,-1.2) -- (\x,1.2) node[right]{$\min\limits_x U$};
				\draw (\x-\dx,0) -- (\x+\dx,0) node[left]{$U_\varnothing$}; 
				\draw (\x-\dx,-0.3) -- (\x+\dx,-0.3) node[right]{$U_r$}; 
				\draw (\x-\dx,0.3) -- (\x+\dx,0.3) node[right]{$U_\varw$};
				\draw[double,->] (3,0) --node[above]{coarse graining}node[below]{($\ell_\varw \to \ell_r$)} (5,0);
			\end{scope}
			\begin{scope}[xshift=7cm,yshift=0]
				\draw (0,0) ellipse (1cm and 1.2cm);
				\node at (-1,1) {$L$};
				\coordinate (o) at (0,-0.7);
				\coordinate (r) at (0,0.7);
				\node at (o) {$\bullet$};
				\node[right] at (o) {$\ell_0$};
				\node at (r) {$\bullet$};
				\node[right] at (r) {$\ell_1$};
				\def\x{1.8};\def\dx{0.05};
				\draw[->] (\x,-1.2) -- (\x,1.2) node[right]{$\min\limits_x U$};
				\draw (\x-\dx,0) -- (\x+\dx,0) node[right]{$U_0 = U_1$};
				\draw[->] (r) -- (-0.5,0.7)node[below]{$\varv$};
			\end{scope}
		\end{tikzpicture}
\end{center}

\caption{
		A schematic representation of the coarse graining approach to build the mathematical formalization of the problem. Here $L$ is an abstract 
		ligand space, $\ell_r$ is the correct ligand, $\ell_\varw$ is the incorrect lignad, $\ell_\varnothing$ symbolises the solvent or the empty 
		binding site, $U$ is the potential energy of the system.
	}
\label{scheme}
\end{figure}

The initial problem sought a phenotype $a \in A$ that brings minimal energies for the ligand $\ell_\varnothing$ to that of the ligands $\ell_\varw$ and 
$\ell_r$ (with the correct ordering, but we will consider this issue separately). The corresponding protein would be considered to take a large 
conformational change upon binding if the corresponding configurations $x_\varnothing$ and $x_\varw \approx x_r$ are very different. In the simplified 
problem, the initial discrimination problem reduces to finding $a$ such that the global minima of $U$ in $X$ corresponding to $\ell_0$ and $\ell_1$ 
have the same energy level. We will call this a \emph{reduced discrimination problem}. Then the initial question is whether or not the corresponding 
minimum points $x_0$ and $x_1$ coincide in $X$.

We will also consider an \emph{infinitesimal discrimination problem}, where $\ell_r \approx \ell_\varw$ and we replace the difference between 
$\ell_\varw$ and $\ell_r$ by a vector $\varv$ in $L$ at $\ell_1$ that shows the direction from $\ell_r$ to $\ell_\varw$. We will assume that a given 
$a$, which solves the reduced discrimination problem, also solves the infinitesimal discrimination problem if the displacement along $\varv$ in $L$ at 
$\ell_1$ corresponds to a positive change in the energy value at the global minimum (see Figure~\ref{scheme}).

\section{\large Main results}
\label{section-theorems}

\subsection{\normalsize Exquisite discrimination, conformational changes, and fine tuning}

Let us recall the following notion. 

\begin{definition} A subset of a topological space is called a \emph{residual set} if it can be represented by a countable intersection of open dense 
subsets. A \emph{typical} element of the topological space is an element that belongs to some residual set. A situation is \emph{generic} if it can be 
represented as a typical element of some space relevant to the problem. A complement to a residual set is called \emph{meager set}. \end{definition}

We must consider that the naturally defined system, given by $U$, is typical. Indeed, the meaning of residual sets is that their complements, meager 
sets, can be considered as negligible and points that belong to them as special. An assumption that naturally occurring systems do not belong to some 
negligible sets is a kind of an extension of the Copernican principle. It is in this sense $U$ is typical.

The first main result now can be formulated in the following theorem.

\begin{theorem}
For a typical family of potentials $U \in C^\infty(M)$, solutions to the reduced discrimination problem for ligands $\ell_0$ and $\ell_1$ either do not 
exist, or a typical solution is located on a $(\dim A - 1)$-dimensional submanifold $\hat {\normalfont \Upsilon}$ of $A$ and, if $\dim X > 0$, its 
minimum points $x_0$ (for $\ell_0$) and $x_1$ (for $\ell_1$) are different.
\label{main}
\end{theorem}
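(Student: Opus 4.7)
The plan is to translate the reduced discrimination problem into the transverse intersection of naturally defined smooth submanifolds of $X\times A$, apply Thom's jet transversality theorem to obtain genericity, and only then project to $A$ to extract $\hat {\normalfont \Upsilon}$.

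First, I would fix $\ell_0,\ell_1\in L$ with $\ell_0\neq\ell_1$ and, for $i=0,1$, introduce the parametrized critical locus
\[
C_i \;=\; \bigl\{(x,a)\in X\times A : \partial_x U(x,\ell_i,a)=0\bigr\}.
\]
Thom jet transversality applied to the section $j^1_X U$ of $T^*X\to X\times A$ gives, for a residual set of $U\in C^\infty(M)$, that each $C_i$ is a smooth codimension-$\dim X$ submanifold of $X\times A$ and hence of dimension $\dim A$; moreover the projection $\pi_i\colon C_i\to A$ generically has only Morse-type singularities, so on the open subset $C_i^{\min}\subset C_i$ of pairs $(x,a)$ for which $x$ is a non-degenerate global minimum of $U(\cdot,\ell_i,a)$, the fibres of $\pi_i$ are finite.

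Second, let $W = C_0^{\min}\times_A C_1^{\min}\subset X\times X\times A$, which is of dimension $\dim A$, and consider the energy-difference map
\[
G(x_0,x_1,a)\;=\;U(x_0,\ell_0,a)-U(x_1,\ell_1,a).
\]
A second transversality argument, exploiting the fact that $\ell_0\neq\ell_1$ so that perturbations of $U$ can be localized near each slice $X\times\{\ell_i\}\times A$ independently, shows that $0$ is a regular value of $G$ for a residual $U$. Hence $\Sigma = G^{-1}(0)$ is either empty or a smooth $(\dim A-1)$-dimensional submanifold of $W$, and its image $\hat {\normalfont \Upsilon}\subset A$ under the projection $(x_0,x_1,a)\mapsto a$ is, away from a meager subset, a $(\dim A-1)$-dimensional embedded submanifold, as claimed.

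Finally, to obtain $x_0\neq x_1$ when $\dim X>0$, I would intersect $\Sigma$ with the diagonal $\Delta=\{x_0=x_1\}$. A point of $\Sigma\cap\Delta$ simultaneously satisfies $\partial_x U(x,\ell_0,a)=0$, $\partial_x U(x,\ell_1,a)=0$, and $U(x,\ell_0,a)=U(x,\ell_1,a)$, i.e.\ $2\dim X+1$ independent scalar conditions on $X\times A$; a third jet-transversality claim (again using independent perturbations at $\ell_0$ and $\ell_1$) makes them generically transverse, yielding $\dim(\Sigma\cap\Delta)=\dim A-\dim X-1$. Projecting to $A$ shows that the subset of $\hat {\normalfont \Upsilon}$ with coinciding minima has dimension strictly less than $\dim A-1$, so a typical solution has $x_0\neq x_1$. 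The main obstacle throughout is to justify the independence of perturbations invoked in steps two and three: an arbitrary $U\in C^\infty(M)$ does not a priori allow independent modification of its restrictions to the disjoint slices $X\times\{\ell_i\}\times A$. The remedy, which I expect to occupy the bulk of the formal proof, is to build product-type perturbations using bump functions in $L$ supported in disjoint neighbourhoods of $\ell_0$ and $\ell_1$, then run Thom's density theorem in $C^\infty(M)$, effectively reducing everything to a multijet transversality statement on $X\times A$.
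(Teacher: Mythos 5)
Your overall strategy is the paper's: encode the two critical-point conditions and the energy-equality condition as transversality conditions on two-point jet data. The paper works in the multijet bundle $J^1_2(M,\mathbb R)$ and cuts out submanifolds $W_1$, $W_2$ by the preimages of $\Delta \mathbb R^2$, $P_X^2$, $\{(\ell_0,\ell_1)\}$, $\Delta A^2$, and $\Delta X^2$; your fibre product $C_0^{\min}\times_A C_1^{\min}$ together with $G^{-1}(0)$ and $\Sigma\cap\Delta$ is the same construction in different clothing, and your dimension counts $\dim\Sigma=\dim A-1$ and $\dim(\Sigma\cap\Delta)=\dim A-\dim X-1$ agree with the paper's $\dim V_1$ and $\dim V_2$. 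You also correctly identify that $\ell_0\neq\ell_1$ is what makes the perturbations at the two slices independent; the paper gets this for free because the multijet bundle sits over $M^{(2)}=M^2\setminus\Delta M^2$ and the two base points $(x_0,\ell_0,a)$ and $(x_1,\ell_1,a)$ are automatically distinct, so Mather's multijet transversality theorem applies directly, without the hand-built bump functions in $L$ that you anticipate needing.

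The genuine gap is the closing claim of your second step: that the image of $\Sigma$ under $(x_0,x_1,a)\mapsto a$ is, away from a meager subset, a $(\dim A-1)$-dimensional submanifold. Nothing you have established forces this. A $(\dim A-1)$-dimensional submanifold of $X^2\times A$ can be everywhere tangent to the fibres of the projection to $A$, in which case its image drops dimension, and genericity of $U$ does not obviously exclude this because $\Sigma$ is not a generic submanifold but one cut out by conditions that all refer to that same projection. This is where the paper does its real work: Lemma~1 (non-isolated critical points form a set of infinite codimension, hence every fibre of $\Sigma\to A$ is finite) followed by Lemma~2 (a compact $(\dim A-1)$-dimensional submanifold with finite fibres over $A$ projects regularly at a typical point, proved by an argument with distributions and pulled-back $1$-forms; it is the longest proof in the paper). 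You do assert finiteness of the fibres of $\pi_i$ on $C_i^{\min}$, so you have half of the required input, but the regularity of the projection at typical points is exactly the missing idea, not a routine remark. A secondary omission: when some $U_{\ell_i,a}$ has several global minima, two different branches of $\Sigma$ over the same $a$ could pair a minimum at $\ell_0$ with a coinciding minimum at $\ell_1$ even though each single branch has $x_0\neq x_1$; the paper closes this loophole by showing that such $a$ necessarily lie in the projection of the diagonal stratum and are therefore already excluded.
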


It should be noted that this mathematical result is intuitively expected from the beginning. Indeed, one formally has to find points of minimum for 
$U_{\ell_0,a}$ and $U_{\ell_1,a}$. Let them be $\hat x_0(a)$ and $\hat x_1(a)$, respectively, where the dependence on $a$ is explicitly indicated. The 
solution is given by traits $a$ such that $U(\hat x_0(a), \ell_0, a) = U(\hat x_1(a), \ell_1, a)$. This constitutes one condition on 
$a$, which is intuitively expected to be satisfied on a codimension-1 hypersurface of $A$. In the same way, additional constraint of no conformational 
change is written in the form $\hat x_0(a) = \hat x_1(a)$ and is equivalent to $\dim X$ additional conditions. One would intuitively expect that the 
set of solution to discrimination without conformational changes occupies a submanifold of codimension $\dim X + 1$. However, the intuition alone is 
not suitable to treat multidimensional problems. In particular, the condition on $a$ is not a simple equation but depends on solutions $\hat x_0$ and 
$\hat x_1$ of the energy minimization problem. These solutions themselves depend on $a$ in a complex manner, which may involve discontinuities of 
rearrangements. The purpose of Theorem~\ref{main} and the following Theorem~\ref{main2} is to justify the intuitive conclusion and to clarify in which 
sense it is true.

The evolutionary solutions delivered by Theorem~\ref{main} only guarantee that, after going back from the coarse graining picture, the minimal energies 
for $\ell_\varnothing$, $\ell_r$, and $\ell_\varw$ will be close. However, for a discriminating protein to work correctly, it is important to have the 
right order of these energies: $U_r < U_\varnothing < U_\varw$. Therefore, we will consider the infinitesimal discrimination problem that probes the 
validity of this constraint by infinitesimally small deformation of solutions for the reduced discrimination problem at $\ell_1$.

More specifically, consider a nonzero vector $\varv$ on $L$ emanating from $\ell_1$ ($\varv \in T_{\ell_1} L$). This vector can be regarded as showing 
the direction from $\ell_r$ to $\ell_\varw$. These points are considered to be infinitesimally close to $\ell_1$, and $\ell_1$ has the same energy 
minimum as $\ell_0$. Therefore, for the phenotype $a$ to be a solution to the full discrimination problem, the displacement along $\varv$ with the 
fixed $a$ must increase the minimal energy. The boundary between solutions that respect this requirement and those that do not is made of $a$ such that 
there is no change in the minimal energy level in the direction spanned by $\varv$.

The second main result concerns this additional infinitesimal constraint on the order of the minima and is formulated as the following theorem.

\begin{theorem}
For a typical family of potentials $U$, solutions to the infinitesimal discrimination problem for ligands $\ell_0$ and $\ell_1$ and for a separating 
vector $\varv$ either do not exist, or a typical solution is located on a $(\dim A - 1)$-dimensional submanifold $\tilde {\normalfont \Upsilon}$ of $A$ 
and, if $\dim X > 0$, its minimum points $x_0$ and $x_1$ are different.
\label{main2}
\end{theorem}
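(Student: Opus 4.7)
The plan is to leverage Theorem~\ref{main} directly: once the reduced discrimination problem is handled, the infinitesimal problem only adds a scalar sign condition on $\hat\Upsilon$, and the envelope theorem makes this condition tractable.

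First I would apply Theorem~\ref{main} to obtain the smooth codimension-$1$ submanifold $\hat\Upsilon\subset A$ of reduced-discrimination solutions, together with the locally smooth selections $a\mapsto \hat x_0(a)$ and $a\mapsto\hat x_1(a)$ of global minimizers. For a typical $U$, the global minima of $U_{\ell_i,a}$ are isolated and non-degenerate on a residual subset of $\hat\Upsilon$, so both $\hat x_i$ can be taken smooth there.

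Second, setting $E(\ell,a)=\min_x U(x,\ell,a)$, at a non-degenerate critical point of $U(\cdot,\ell_1,a)$ the envelope theorem gives $\partial_\varv E(\ell_1,a) = d_\ell U(\hat x_1(a),\ell_1,a)[\varv]$, with no contribution from the variation of $\hat x_1$ in $\ell$. The infinitesimal discrimination problem is therefore equivalent to the conjunction of the reduced problem and the scalar sign condition $d_\ell U(\hat x_1(a),\ell_1,a)[\varv]>0$, and one defines
\[
  \tilde\Upsilon \;=\; \bigl\{\, a\in\hat\Upsilon \;:\; d_\ell U(\hat x_1(a),\ell_1,a)[\varv] > 0 \,\bigr\}.
\]
As the preimage of $(0,\infty)$ under a smooth function on $\hat\Upsilon$, this set is open in $\hat\Upsilon$, hence (if nonempty) a smooth $(\dim A - 1)$-dimensional submanifold of $A$. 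The conformational change $x_0\ne x_1$ at a typical $a\in\tilde\Upsilon$ is then inherited from Theorem~\ref{main}, since $\tilde\Upsilon\subset\hat\Upsilon$ and the residual subset of $\hat\Upsilon$ on which $x_0\ne x_1$ meets every open subset.

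The main obstacle is controlling the zero locus of $a\mapsto d_\ell U(\hat x_1(a),\ell_1,a)[\varv]$ on $\hat\Upsilon$: one must ensure that for a typical $U$ this scalar function is transverse to $0$, so that its vanishing set is a genuine codimension-$2$ submanifold of $A$ and cannot absorb any open piece of $\hat\Upsilon$. I would handle this by extending the multi-jet transversality argument underlying Theorem~\ref{main} to an enlarged target that also records the component $d_\ell U[\varv]$ of the $1$-jet in the $\ell$-direction at $(\hat x_1,\ell_1,a)$; the extra scalar coordinate contributes exactly the one independent condition required, leaving $\tilde\Upsilon$ of codimension $1$ in $A$ whenever nonempty. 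The case where this open subset is empty corresponds to the no-solution alternative in the statement.
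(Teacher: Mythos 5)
Your argument is essentially the paper's own: apply Theorem~\ref{main}, observe (via the envelope theorem) that the infinitesimal condition reduces to the sign of the directional derivative of $\min_x U(x,\ell,a)$ along $\varv$ at $\ell_1$, and intersect $\hat\Upsilon$ with the open set where this sign is positive. Two remarks. First, your ``main obstacle'' is not one: the theorem only asserts that $\tilde\Upsilon$ is a $(\dim A-1)$-dimensional submanifold, and this already follows from openness of the positivity set; there is no need for the sign function to be transverse to $0$ or for its zero locus to have codimension $2$, and the paper performs no such extra transversality argument. Second, the point you pass over quickly --- that $\hat x_1(a)$ jumps where the global minimizer of $U_{\ell_1,a}$ switches branches, so the sign function is only piecewise smooth on $\hat\Upsilon$ --- is precisely where the paper spends its effort (a two-branch $\min(\phi_1,\phi_2)$ representation of the minimal-energy graph and an explicit proof that $\{f\leqslant 0\}$ is closed); in your version it suffices to restrict to the open dense subset of $\hat\Upsilon$ away from this switch locus, which your appeal to non-degenerate minima on a residual subset implicitly does, so the conclusion about typical solutions survives.
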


In other words, the additional requirement of a correct order in energy minima does not qualitatively change the situation. It can make the set $\hat 
\Upsilon$ smaller, though ($\tilde \Upsilon \subset \hat \Upsilon$ in general).

\subsection{\normalsize Conformational changes and allosteric regulation}

Let us now look at how the development of a conformational change as a byproduct of a solution to the exquisite discrimination problem can help a 
development of allosteric regulation. By allosteric regulation we will understand the disruption of the initial ability to discriminate two ligands by 
binding of another ligand to a distant site of the molecule. 

Let us assume now that the protein in question can bind two different ligands: $\lambda$ and $\rho$. Therefore, the environmental variable takes the 
form $\ell = (\lambda,\rho)$ and it belongs to the space $L = \Lambda \times \Rho$, $\lambda \in \Lambda$, $\rho \in \Rho$. Let us denote, as before, 
the situation when $\lambda$ is bound by $\lambda_1$ and when it is not bound by $\lambda_0$. Likewise, we have $\rho_1$ and $\rho_0$ for the bound and 
free state of the ligand $\rho$. Note, that we assume, as before, that $\lambda_1$ in fact represents two ligands: $\lambda_r$ and $\lambda_\varw$. The 
protein discriminates these ligands. In contrast, $\rho_1$ is assumed to be a single ligand, which is bound by the protein without discrimination. 
Based on the theorems of the previous section we can expect a conformational change of the protein upon binding of $\lambda$, upon binding of $\rho$, 
upond binding of $\lambda$, when $\rho$ is already bound, and \emph{vice versa}.

Let us now assume in addition that the binding of $\lambda$ and $\rho$ is localized on the molecule in question and that it happens at different sites. 
Let us also assume that the sites are not to directly coupled. This can be expressed in the following way. Let $x_1$ be the degrees of freedom involved 
in the interaction with the ligand $\lambda$ (coordinates of atoms interacting with $\lambda$, for example), $x_2$ be the degrees of freedom involved 
in binding $\rho$, and $x_0$ be the residual degrees of freedom. We assume thus that $X = X_0 \times X_1 \times X_2$ with $x_0 \in X_0$, $x_1 \in X_1$, 
and $x_2 \in X_2$. Then the potential decomposes in this case as
	\begin{equation}
		U(x,\ell,a) = U_0(x_0,x_1,x_2,a) + U_1(x_1,\lambda,a) + U_2(x_2,\rho,a).
	\label{U012}
	\end{equation}

Let $a$ be a solution to the reduced exquisite discrimination problem (the reasoning is analogous for the infinitesimal problem) for $\ell_{00} = 
(\lambda_0,\rho_0)$ and $\ell_{10} = (\lambda_1,\rho_0)$, and define $\ell_{01} = 
(\lambda_0,\rho_1)$ with $\ell_{11} = (\lambda_1,\rho_1)$. Then the following result holds.

\begin{theorem}
Suppose that the protein changes its conformation during the switch from $\ell_{01}$ to $\ell_{11}$ (upon binding of $\lambda$ on the background of 
bound $\rho$). Then the situation described by
	\begin{equation}
		\min U_{\ell_{00},\,a} = \min U_{\ell_{10},\,a}\quad \textrm{and}\quad \min U_{\ell_{01},\,a} = \min U_{\ell_{11},\,a}
	\label{situation1}
	\end{equation}
is not structurally stable in the sense that it can be turned by an arbitrarily small perturbation of $U$ into situation described by
	\begin{equation}
		\min U_{\ell_{00},\,a} = \min U_{\ell_{10},\,a}\quad \textrm{but}\quad \min U_{\ell_{01},\,a} \neq \min U_{\ell_{11},\,a}.
	\label{situation2}
	\end{equation}
In the contrary, situation~(\ref{situation2}) is structurally stable in the sense that for any small enough perturbation of $U$ 
it cannot be turned into situation~(\ref{situation1}).
\label{main3}
\end{theorem}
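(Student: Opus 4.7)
The plan is to reduce both assertions to two elementary observations on $C^\infty(M)$: the Lipschitz bound $|\min \tilde U_{\ell, a} - \min U_{\ell, a}| \leq \|\delta U\|_{C^0(M)}$ for any perturbation $\tilde U = U + \delta U$, and the existence of smooth bump functions supported in arbitrarily prescribed neighborhoods of $M$ with arbitrarily small $C^k$-norm. The conformational-change hypothesis, somewhat remarkably, will not enter the formal argument sketched below; its role is to anchor the biological interpretation, since mere distinctness of the four $\ell_{ij}$ in $L$ is already sufficient for the instability construction.

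For the instability of~(\ref{situation1}) I would proceed constructively. Because $\lambda_0 \neq \lambda_1$ and $\rho_0 \neq \rho_1$, the four points $\ell_{ij}$ are pairwise distinct in $L = \Lambda \times \Rho$, so I would choose an open neighborhood $W \subset L$ of $\ell_{11}$ whose closure avoids $\{\ell_{00}, \ell_{10}, \ell_{01}\}$. By compactness of $X$, $U_{\ell_{11}, a}$ attains its minimum on a closed set $K \subset X$; I would then fix a smooth $\phi \colon M \to [0,1]$ supported in $\Omega \times W \times V_a$ for some open $\Omega \supset K$ and a neighborhood $V_a \ni a$, with $\phi \equiv 1$ on a neighborhood of $K \times \{(\ell_{11}, a)\}$. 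The perturbed potential $\tilde U = U - \varepsilon \phi$ has $C^k$-size $O(\varepsilon)$ for every $k$, and it agrees with $U$ on the fibres over $\ell_{00}, \ell_{10}, \ell_{01}$, so the first equality of~(\ref{situation1}) is automatically preserved and $\min \tilde U_{\ell_{01}, a} = \min U_{\ell_{01}, a}$. On the fibre over $\ell_{11}$, however, $\tilde U \equiv U - \varepsilon$ on $K$, whence $\min \tilde U_{\ell_{11}, a} \leq \min U_{\ell_{11}, a} - \varepsilon$. Combined with the original equality $\min U_{\ell_{01}, a} = \min U_{\ell_{11}, a}$, this forces the strict inequality characteristic of~(\ref{situation2}).

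For the stability of~(\ref{situation2}), I would set $d := |\min U_{\ell_{01}, a} - \min U_{\ell_{11}, a}| > 0$ and, for any perturbation with $\|\delta U\|_{C^0(M)} < d/2$, apply the Lipschitz bound separately at $\ell_{01}$ and $\ell_{11}$ to obtain
$$|\min \tilde U_{\ell_{01}, a} - \min \tilde U_{\ell_{11}, a}| \geq d - 2\|\delta U\|_{C^0(M)} > 0,$$
so the strict inequality persists and the perturbed $\tilde U$ cannot satisfy~(\ref{situation1}).

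The only nontrivial point, I expect, is handling possibly degenerate sets of global minimizers, which is what forces taking $\phi \equiv 1$ on the full minimizing set $K$ rather than at a single point; this is routine. Were one to insist instead on perturbations preserving the decomposition~(\ref{U012}), the conformational-change hypothesis would become essential: one would then need the change between $\hat x^{01}$ and $\hat x^{11}$ to have a nontrivial $x_2$-component (so that supports of $\delta U_2(\cdot, \rho_1, a)$ near $\hat x_2^{11}$ and $\hat x_2^{01}$ can be separated), a genericity point that would then be the main obstacle.
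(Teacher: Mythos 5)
Your stability half is fine: the bound $|\min\tilde U_{\ell,a}-\min U_{\ell,a}|\leqslant\|\delta U\|_{C^0(M)}$ immediately protects the strict inequality in (\ref{situation2}), and this is arguably cleaner than the paper's appeal to properties of nondegenerate critical points.

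The instability half has a real gap, and you half-diagnose it yourself when you observe that the conformational-change hypothesis ``will not enter the formal argument.'' That is a symptom that you are proving a different, essentially vacuous statement. Your perturbation $-\varepsilon\phi$, with $\phi$ supported over a neighbourhood $W$ of $\ell_{11}=(\lambda_1,\rho_1)$ in $L$, is a function of the joint variable $(\lambda,\rho)$ that cannot be written in the form $U_0(x_0,x_1,x_2,a)+U_1(x_1,\lambda,a)+U_2(x_2,\rho,a)$: it directly couples the two binding sites, which is precisely what the structural assumption (\ref{U012}) of this section forbids. In the allosteric setting the admissible perturbations are those realizable as changes of $U_0$, $U_1$, $U_2$ separately --- the paper's own deformation is explicitly checked to have ``the same structure as in (\ref{U012})'' --- because the physical claim is that a modification at the $\rho$-site, which can only enter through $U_2(x_2,\rho,a)$, destroys the discrimination of $\lambda$. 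Within that class your construction is unavailable, and the entire content of the theorem is the point you defer to your final paragraph: one must first show that generically the conformational change between $\hat x^{01}$ and $\hat x^{11}$ has a nontrivial $X_2$-component (the paper does this by composing $U_0$ with the time-$\epsilon$ flow of a compactly supported vector field tangent to $X_2$ near $(\hat x^{11},a)$, which preserves (\ref{situation1}) while separating $\hat x^{01}_2$ from $\hat x^{11}_2$), and only then add $\epsilon\psi(x_2,\rho,a)$ to $U_2$ with $\psi\equiv1$ near $(\hat x^{11}_2,\rho_1,a)$ and $\psi\equiv0$ near $(\hat x^{01}_2,\rho_1,a)$ and on the $\rho_0$-fibres. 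That genericity step, which you flag as ``the main obstacle'' and do not carry out, is exactly where the hypothesis of the theorem is used; without it your argument establishes instability only with respect to perturbations that the model excludes.
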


In other words, situation~(\ref{situation2}) means that when $\rho$ is not bound, the protein performs the exquisite discrimination for $\lambda$, 
while when $\rho$ is bound, this ability is broken. Therefore, $\rho$ acts as an allosteric regulator for the exquisite discrimination of $\lambda$. 
The theorem asserts that such behaviour is typical. The condition of the theorem substantiantly uses the genericity of the conformational change upon 
binding provided by Theorem~\ref{main}.

\section{\large Discussion and biological interpretation}
\label{section-discussion}

Theorems~\ref{main}--\ref{main3} provide rigorous results in the limit of indistinguishable ligands ($\ell_\varw \to \ell_r$) and for the zero 
temperature approximation. In real systems, the difference between the right and the wrong ligands is finite, free energies of different states are 
allowed to be different provided that the correct order is preserved, and the temperature is positive. Going back from the mathematical idealization 
adopted above to physically meaningful models with finite differences and nonzero temperature blurs the rigor of the statements. An exclusion from a 
generic situation must be understood as not something impossible for practical observation but rather as something less probable than the generic case. 
The more the difference and the temperature the less strong the statement. This can be graphically demonstrated for the case of a nonzero difference 
between $\ell_r$ and $\ell_\varw$ (and between the energy levels $U_r$, $U_\varw$, and $U_\varnothing$) still assuming zero temperature. A solution to 
the exquisite discrimination problem in this case corresponds to a phenotype $a$ such that (\ref{U-condition}) holds. If we denote $\ell_0 = 
\ell_\varnothing$ and $\ell_1 = \ell_r$, the corresponding set $\hat \Upsilon$ provided by Theorem~\ref{main} (we will denote $\hat \Upsilon_r$) 
defines the border of phenotypes that respect $U_r < U_\varnothing$. In the same way, the analogous set $\hat \Upsilon_\varw$ defined for $\ell_0 = 
\ell_\varnothing$ and $\ell_1 = \ell_\varw$ marks the border of phenotypes that respect $U_\varnothing < U_\varw$. When $\ell_r$ becomes close to 
$\ell_\varw$, $\hat \Upsilon_r$ becomes close to $\hat \Upsilon_\varw$. From this it is clear that for $\ell_r \neq \ell_\varw$, phenotypes that solve 
the exquisite discrimination problem are situated between $\hat \Upsilon_r$ and $\hat \Upsilon_\varw$. This is schematically shown by the shaded region 
on Figure~\ref{figUpsilon}. In fact, this regions is a ``thick'' version of the codimension-1 submanifold $\tilde \Upsilon$ given by 
Theorem~\ref{main2} with an appropriately chosen direction $\varv$. Indeed, when $\ell_r$ approaches $\ell_\varw$ by some trajectory, the shaded region 
collapses to the submanifold that corresponds to $\varv$ such that $\varv$ is tangent to the trajectory. We see that a nonzero difference between the 
ligands makes the possible evolutionary solutions to their discrimination problem to occupy a spatial domain in the trait space $A$ rather than its 
infinitely thin codimension-1 submanifold. Yet, with sufficiently similar ligands (which is supposed by the \emph{exquisite} discrimination problem) 
they stay near such manifold.

\begin{figure}
\centering
	\begin{tikzpicture}
		\draw (0,0) -- (4,0) -- (4,4)node[below left]{$A$} -- (0,4) -- (0,0);
		\node at (0.3,3.7) {\textbf{(a)}};
		\begin{scope}
			\clip (0,0) -- (4,0) -- (4,4) -- (0,4) -- (0,0);
			\begin{scope}
				\clip (0,1) to[out=0,in=-110] (1,2) to[out=80,in=-180] (2,3)
					to[out=0,in=150] (3,2) to[out=-30,in=-135] (4,2) to (4,4) to (0,4)[loop];
				\begin{scope}
					\clip (0,1) to[out=30,in=-140] (1,2) to[out=50,in=-150] (2,3)
						to[out=30,in=120] (3,2) to[out=-60,in=-105] (4,2)
						to (4,0) to (0,0)[loop];
					\fill[gray!50!white] (0,0) rectangle (4,4);
				\end{scope}
			\end{scope}
			\draw[thick,color=blue!50!black] (0,1) to[out=0,in=-110] (1,2) to[out=80,in=-180] (2,3)
				to[out=0,in=150] (3,2) to[out=-30,in=-135] (4,2);
			\draw[thick,color=red!50!black] (0,1) to[out=30,in=-140] (1,2) to[out=50,in=-150] (2,3)
				to[out=30,in=120] (3,2) to[out=-60,in=-105] (4,2);
			\draw[thick,->] (1.3,2.7) -- (1,3) node[above]{\footnotesize \color{blue} $U_r < U_\varnothing$};
			\draw[thick,->] (1.45,2.55) -- (1.75,2.25) node[below]{\footnotesize \color{red} $U_\varnothing < U_\varw$};
			\draw (3.5,1.85) -- (3.5,2.2) node[above]{\footnotesize \color{blue} $\hat \Upsilon_r$};
			\draw (3.5,1.75) -- (3.5,1.5) node[below]{\footnotesize \color{red} $\hat \Upsilon_\varw$};
			\draw (0.5,1.3) -- (2,0.5) node[below]{\small solutions} -- (2.7,2.5);
			\coordinate (a) at (2.5,2.8);
			\node at (a) {$\bullet$};
			\foreach \p in {0,30,...,360} {
				\draw[->] (a) -- ++(\p:0.3);
			}
		\end{scope}

		\begin{scope}[xshift=6cm]
			\begin{scope}
				\clip (0,0) -- (4,0) -- (4,4) -- (0,4) -- (0,0);
				\begin{scope}
					\clip (0,1) to[out=0,in=-110] (1,2) to[out=80,in=-180] (2,3)
						to[out=0,in=150] (3,2) to[out=-30,in=-135] (4,2) to (4,4) to (0,4)[loop];
						\fill[gray!50!white] (0,0) rectangle (4,4);
				\end{scope}
				\draw[thick,color=blue!50!black] (0,1) to[out=0,in=-110] (1,2) to[out=80,in=-180] (2,3)
					to[out=0,in=150] (3,2) to[out=-30,in=-135] (4,2);
				\draw[thick,->] (1.3,2.7) -- (1,3) node[above]{\footnotesize \color{blue} $U_r < U_\varnothing$};
				\draw (3.5,1.85) -- (3.5,2.2) node[above]{\footnotesize \color{blue} $\hat \Upsilon_r$};
				\draw (0.5,2.5) -- (2,1) node[below]{\small solutions};
				\coordinate (a) at (2.5,3.5);
				\node at (a) {$\bullet$};
				\foreach \p in {0,30,...,360} {
					\draw[->] (a) -- ++(\p:0.3);
				}
			\end{scope}
			\draw (0,0) -- (4,0) -- (4,4)node[below left]{$A$} -- (0,4) -- (0,0);
			\node at (0.3,3.7) {\textbf{(b)}};
		\end{scope}
	\end{tikzpicture}
	
	\caption{Solutions to the zero-temperature exquisite discrimination problem with finite difference between $\ell_r$ and $\ell_\varw$ \textbf{(a)} 
		and to the simple binding problem for a single ligand $\ell_r$ \textbf{(b)} and their sensitivity to mutations (see the text). A bundle of 
		arrows illustrates mutations away from a given solution.}
	\label{figUpsilon}
\end{figure}
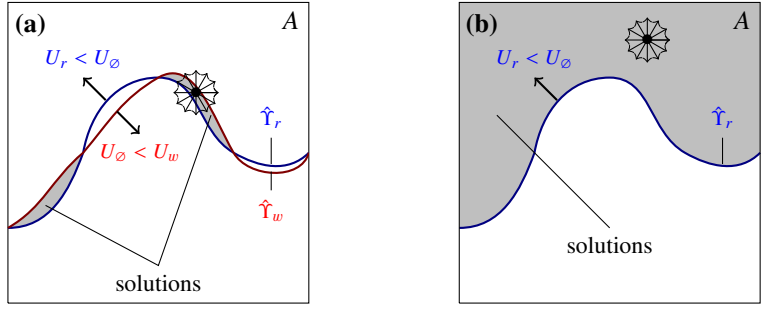

Another notion that is blurred in real systems is that of a large conformational change. In the idealized coarse grained mathematical model, any 
conformational change was interpreted as being large. The proven theorems do not provide any means to determine how large the conformational change is 
or what \emph{large} means in general. Such problems are typical for topological but not metric theorems. Addition of real physics on top of the bare 
topology in this problem (such as the limits on the stifness of the chemical bonds, assumption of a nonzero temperature, the value of the mutational 
effects of individual aminoacids, and so on) might help to destinguish between essential and nonessential changes in the discrimination ability of a 
protein and in its conformation.

Although the first part of the main result (Theorems~\ref{main} and \ref{main2}) can be shortly stated as \emph{discrimination requires a 
conformational change}, the statement would not be entirely correct. First, the \emph{requirement} must not be understood as direct causality. The 
correct interpretation is that most solutions to the discrimination problem will involve a conformational change. It implies that if a system performs 
discrimination and changes its conformation, it should not be surprising and no special explanation is required to this fact. In contrary, if a 
discriminating system does not show a conformational change, it is an indication on a special additional circumstances that may be of interest. Second, 
an application of the same theoretical approach to a protein that just binds to a ligand but not necessarily discriminates between similar ligands 
results in a conclusion that a conformational change is expected to accompany any binding in general.

Let us elaborate the latter statement. The part of the reasoning (in a simplified form) in the proof of Theorem~\ref{main} that involves a 
conformational change still holds in the case of a simple binding without discrimination. This means that we should expect a conformational change upon 
binding \emph{in general}, not only when a discrimination is performed. This general statement is very close to the classical induced fit scenario. The 
competing conformational selection hypothesis in its strict form, instead, represents a very special case (very special form of energy landscape), as 
it requires the conformation of the global minimum of free energy for the unbound state to be also a conformation of a local minimum for the bound 
state and \emph{vice versa}. This situation is not typical for smooth potentials. However, if the relevant conformations are themselves allowed to 
change upon binding, then the situation becomes as typical as the pure induced fit situation. In fact, the distinction between these two cases becomes 
irrelevant, as was already demonstrated in \cite{Rivoire2018} on a simple model. A similar conclusion was formulated in \cite{Hammes2009} based on 
biochemical arguments and experimental observations, and in a new vision of the protein binding proposed in \cite{Csermely2011}.

What discrimination does require is a an evolutionary fine tuning expressed in the dimension of the set of possible solutions. It is this fine tuning 
that brings about the potential for allosteric regulation (in the same sense as a discrimination causes a conformational change). If we combine the 
conclusion of Theorem~\ref{main3} with the above understanding of how such rigorous statements should be interpreted in application to real systems, we 
may conclude the following. \emph{Proteins that discriminate ligands are prone to allosteric regulation by another ligand at a different binding site. 
This sensitivity of the discrimination to the distant binding is associated with the conformational change during the primary binding.} Although this 
question was not studied in this work, we may also expect a wide sensitivity of such protein to mutations. Indeed, a mutation of an aminoacid is in 
some sense analogous to a local binding in its effect on the potential energy. Repeating for this case the reasoning about allosteric effects, we 
conclude that the ability to discriminate is broken by mutations in many sites. One can justify this assertion from a different perspective. Since the 
exquisite discrimination requires an evolutionary fine tuning, we can expect a mutation to break this tuning in a generic case. This is graphically 
represented on Figure~\ref{figUpsilon}. As a consequence, we expect a wide (in the spreading on the level of the primary sequence) mutational effect, 
when many mutations, however distant from the binding site, destroy the discrimination.

Note that the ability to bind a ligand \emph{without} an imposed discrimination problem is generically robust to most mutations. Indeed, the solutions 
to the binding problem for a single ligand lay in a half space of $A$ to one side of $\hat \Upsilon$ given by Theorem~\ref{main} for that ligand and 
the solvent. If a solution is situated deep in this region, it is expected to survive mutations in the sense that the resulting protein retains the 
binding ability (perhaps, with a weaker affinity, see Figure~\ref{figUpsilon}).

The modelling approach taken in this work is in the family of folding landscape models \cite{Bryngelson1995}. Looking at a protein through its (free) 
energy landscape is very natural from the point of view of physics and deserves more attention. The fact that such model supports the conclusion of 
\cite{Rivoire2018} is very important. It shows that an emergence of sophisticated properties of proteins and other biological heteropolymers, upon which 
substantial part of the complexity of life is built, can be attributed to a very simple evolutionary process: selection for a local property, that is 
the ability to discriminate between similar ligands. It is not difficult to imagine a selection process that optimizes this task. Furthermore, such 
ability very probably was required even back at the earliest times of abiogenesis or very early life.

\section{\large Proofs of Theorem~\ref{main}, Theorem~\ref{main2}, and Theorem~\ref{main3}}
\label{section-proof}

We imply in the following that all manifolds and functions (maps) are smooth. The main tool of the proof is the jet-bundle and the multijet 
transversality theorem that is a consequence of the Thom's transversality theorem. We will first recall some definitions and fix some notations.

\begin{definition}
Let $M$ and $N$ be two smooth manifolds and $S \subset N$ be a submanifold. Let $p$ be a point in $M$. A smooth function $f\colon M \to N$ is said to 
be \emph{transverse to $S$ at $p$}, if $df_p\, T_p M + T_{f(p)} S = T_{f(p)} N$, where $T_p M$ means the tangent space to $M$ at $p$ and $df_p$ is the 
differential of $f$ at $p$. $f$ is said to be \emph{transverse to $S$}, if it is transverse to $S$ at each point of $M$. This situation will be denoted 
by $f \pitchfork S$. Let $P \subset N$ be another submanifold. $S$ and $P$ are said to \emph{intersect transversely} (or simply to be 
\emph{transverse}), if $T_q S + T_q P = T_q N$ for each $q \in S \cap P$. This situation will be denoted by $S \pitchfork P$.
\end{definition}

\begin{definition}
The \emph{codimension} of a submanifold $N$ of a manifold $M$ is the number $\codim N = \dim M - \dim N$.
\end{definition}

If $S$ and $N$ are submanifolds of the same manifold and $N \pitchfork S$, then $\codim N\cap S = \codim N + \codim S$ (assuming $N \cap S \neq 
\varnothing$). If this number is negative, then $N \cap S = \varnothing$.

\begin{definition} Let $M$ be a smooth manifold. Two smooth functions $f$ and $g$ from $M$ to $\mathbb R$ are said to have \emph{$k$-th order contact 
at $p \in M$}, if in some coordinate chart around $p$ their values and all their partial derivatives up to order $k$ are equal at $p$. The relation of 
$k$-th order contact is independent of the coordinate chart and defines equivalent classes. The equivalent class of function $f$ by $k$-th order 
contact at $p$, denoted $[f]^k_p$, is called \emph{$k$-jet of $f$ at $p$}. Let $J^k(M,\mathbb R)_p$ be the set of all $k$-jets at $p$. The \emph{bundle 
of $k$-jets of functions on $M$} is the set $J^k(M,\mathbb R) = \coprod\limits_{p \in M} J^k(M,\mathbb R)_p$ with the projection $\pi_k\colon 
J^k(M,\mathbb R) \to M$, $[f]^k_p \mapsto p$ endowed with the differential structure lifted from $M$ by $\pi_k$. Every function $f\colon M \to \mathbb 
R$ generates a \emph{special section of the $k$-jet bundle} $j^k f\colon p \mapsto [f]^k_p$. \end{definition}

Note that jet bundles can be generalized to maps between arbitrary manifolds. Essentially, $k$-jets of functions represent an invariant notion of their 
Taylor polynomials truncated to order $k$. In the special case $k = 1$, the only one we will be interested in the following, the 1-jet bundle 
$J^1(M,\mathbb R)$ is naturally isomorphic to the product $\mathbb R \times T^*M$ (we denote this by $J^1(M,\mathbb R) \simeq \mathbb R \times T^*M$), 
where $T^*M$ is the cotangent bundle of $M$.

\begin{definition} An \emph{$s$-fold multijet bundle} $J^k_s(M,\mathbb R)$ is defined as follows. We denote
	\begin{equation}
		M^{(s)} = \{x \in M^s \colon \forall i,j, 
		1 \leqslant i < j \leqslant s \Rightarrow x_i \neq x_j \}.
	\end{equation}
It is a submanifold of $M^s$. Let $\pi_k$ be the bundle projection $J^k(M,\mathbb R) \to 
M$. Then $J^k_s(M,\mathbb R) = (\pi_k^{\times s})^{-1} (M^{(s)})$. It is a submanifold of $J^k(M,\mathbb R)^s$ and is a fibre bundle over $M^{(s)}$. 
Every function $f$ on $M$ generates its special section $j^k_s f$ by the rule $j^k_s f(x) = (j^k f(x_1),\ldots,j^k f(x_s))$. \end{definition}

Let us denote the \emph{diagonal} of the direct product $M^2$ as $\Delta M^2$. In the special case $s = 2$, the only one we will be interested 
in the following, $M^{(2)}$ has a simple representation: $M^{(2)} = M^2\setminus \Delta M^2$.

\begin{definition}
	Map $f\colon X \to Y$ is \emph{regular} at $p \in X$, if the rank of $df_p$ is maximal. If $f$ is not regular at 
	$p$, it is called \emph{singular} at $p$ and $p$ is called its \emph{critical point}. Map $f\colon X \to Y$ is an \emph{immersion}, if $df_p$
	is injective at every $p \in X$.
\end{definition}

We also need some known theorems.

\begin{theorem}[\cite{Golubitsky1973}, page 52, Theorem 4.4]
	Let $X$ and $Y$ be manifolds, $W \subset Y$ be a submanifold, and $f\colon X \to Y$ be a function and let $f \pitchfork W$. Then 
	$f^{-1}(W)$ is a submanifold of $X$. If in addition $f(X) \cap W \neq \varnothing$, then $\codim f^{-1}(W) = \codim W$.
\label{preimage}
\end{theorem}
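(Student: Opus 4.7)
The plan is to reduce the statement to the classical submersion theorem (equivalently, the implicit function theorem) by choosing a local defining submersion for $W$ near each point of $f^{-1}(W)$. Since being a submanifold is a local property, it suffices to exhibit around every $p \in f^{-1}(W)$ an open neighborhood in $X$ on which $f^{-1}(W)$ is cut out as the zero set of a smooth submersion into $\mathbb R^k$, where $k = \codim W$.

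First I would fix $p \in f^{-1}(W)$ and set $q = f(p) \in W$. Using that $W$ is a submanifold of codimension $k$, I pick an adapted chart around $q$ in $Y$, which gives an open $V \ni q$ together with a smooth submersion $\varphi \colon V \to \mathbb R^k$ such that $W \cap V = \varphi^{-1}(0)$ and $\ker d\varphi_q = T_q W$. Setting $U = f^{-1}(V)$ and $h = \varphi \circ f \colon U \to \mathbb R^k$, one gets the local identification $f^{-1}(W) \cap U = h^{-1}(0)$. The transversality hypothesis $df_p\,T_p X + T_q W = T_q Y$, combined with $T_q W = \ker d\varphi_q$, yields
\[
 dh_p\, T_p X \;=\; d\varphi_q\bigl(df_p\, T_p X\bigr) \;=\; d\varphi_q\, T_q Y \;=\; \mathbb R^k,
\]
so $h$ is a submersion at $p$, and hence on a (possibly smaller) neighborhood of $p$ by the openness of the maximal-rank condition. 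The standard submersion theorem then identifies $h^{-1}(0)$ near $p$ with a smooth submanifold of codimension $k$, so $f^{-1}(W)$ is locally a submanifold of $X$ of codimension $k$.

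Carrying out this construction at each point of $f^{-1}(W)$ provides a covering by charts in which $f^{-1}(W)$ appears as a smooth $(\dim X - k)$-dimensional submanifold, which proves the first statement; when $f(X) \cap W \neq \varnothing$ the integer $k$ is actually attained, so $\codim f^{-1}(W) = \codim W = k$. I do not expect a genuine obstacle in executing this plan. The only mildly delicate step is the passage from the abstract definition of submanifold (via adapted charts) to a concrete local submersion $\varphi$ cutting out $W$; this is standard, and once $\varphi$ is in hand the transversality hypothesis translates directly into the surjectivity of $d(\varphi \circ f)_p$, which is precisely what the implicit function theorem needs.
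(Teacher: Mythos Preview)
Your argument is the standard and correct proof of this classical transversality preimage theorem: reduce to a local defining submersion $\varphi$ for $W$, compose with $f$, and use that $f \pitchfork W$ at $p$ is exactly the surjectivity of $d(\varphi \circ f)_p$, so the submersion theorem applies. There is nothing to compare against, since the paper does not prove this statement at all; it simply quotes it as a known result from Golubitsky--Guillemin \cite{Golubitsky1973} and uses it as a black box in the proofs of Theorems~\ref{main} and~\ref{main2}.
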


\begin{theorem}[A special case of Mather's multijet transversality theorem, \cite{Golubitsky1973}, page 57, Theorem 4.13]
	Let $X$ be a manifold and $W$ be a submanifold of $J^k_s(X,\mathbb R)$. The subset of $C^\infty(X)$ constructed of functions $f$ that verify 
	$j^k_s f \pitchfork W$ is a residual set of $C^\infty(X)$ in the Whitney $C^\infty$ topology (for the definition, see \cite[p.~42]{Golubitsky1973}). 
	Moreover, if $W$ is compact, this subset is open. This theorem is called Thom's transversality theorem for $s = 1$ and thus $J^k_s(M,\mathbb R) = 
	J^k(M, \mathbb R)$, $j^k_s f = j^k f$.
\label{multijet}
\end{theorem}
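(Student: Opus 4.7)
The plan is to prove the two halves of the statement separately. For the non-stability of situation~(\ref{situation1}), I would construct a localized bump perturbation that lowers $\min U_{\ell_{11},a}$ without altering the other three minima. For the stability of situation~(\ref{situation2}), I would invoke the Lipschitz continuity of $\min$ under $C^0$ perturbations to see that the strict inequality $\min U_{\ell_{01},a}\neq\min U_{\ell_{11},a}$ persists. Throughout, fix the solution $a$ and write $E_{ij} = \min_x U(x,\ell_{ij},a)$, with $x^*_{ij}\in X$ a corresponding global minimizer.

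For the first half, I would observe that $\ell_{00},\ell_{01},\ell_{10},\ell_{11}$ are four distinct points of $L$, so there exists a neighborhood $V\subset L$ of $\ell_{11}$ whose closure excludes the other three. By the conformational change hypothesis, $x^*_{01}\neq x^*_{11}$, so I can also choose a neighborhood $W\subset X$ of $x^*_{11}$ with $x^*_{01}\notin\overline W$. Let $\phi\colon L\to[0,1]$, $\psi\colon X\to[0,1]$, $\chi\colon A\to[0,1]$ be smooth bump functions with $\supp\phi\subset V$, $\supp\psi\subset W$, $\chi\equiv 1$ near $a$, and $\phi(\ell_{11})=\psi(x^*_{11})=1$. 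Consider
\begin{equation}
\tilde U(x,\ell,a') = U(x,\ell,a') - \varepsilon\,\phi(\ell)\,\psi(x)\,\chi(a').
\end{equation}
The $C^k$-norm of $\tilde U-U$ is $O(\varepsilon)$ for every fixed $k$. At $\ell\in\{\ell_{00},\ell_{10},\ell_{01}\}$ the factor $\phi(\ell)$ vanishes, so $E_{00},E_{10},E_{01}$ are unchanged and the equality $E_{00}=E_{10}$ persists. At $\ell=\ell_{11}$, one has $\tilde U\geqslant U-\varepsilon\geqslant E_{11}-\varepsilon$ globally, while $\tilde U(x^*_{11},\ell_{11},a)=E_{11}-\varepsilon$, so the perturbed minimum equals exactly $E_{11}-\varepsilon$. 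Thus $E_{01}=E_{11}>E_{11}-\varepsilon$, producing (\ref{situation2}) from an arbitrarily small perturbation.

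For the second half, assume (\ref{situation2}) holds, so that $\delta \deq |E_{01}-E_{11}|/2>0$. For any smooth $\Delta$ with $\|\Delta\|_{C^0}<\delta/2$, the elementary estimate $|\min_x(U+\Delta)(x,\ell,a)-\min_x U(x,\ell,a)|\leqslant\|\Delta\|_{C^0}$ (since $\min$ is $1$-Lipschitz in the sup norm) implies that each perturbed $E_{ij}$ differs from the original by less than $\delta/2$, so the gap between $E_{01}$ and $E_{11}$ remains positive. Hence (\ref{situation1}), which would require $E_{01}=E_{11}$ exactly, cannot arise. The main obstacle in the whole argument lies in the first half: one must verify that the bump $\varepsilon\,\phi\psi\chi$ lowers $E_{11}$ by exactly $\varepsilon$ rather than creating a spurious competing minimum outside $W$. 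The conformational change hypothesis is what permits the joint localization $\supp(\phi\psi)\subset V\times W$ that simultaneously respects all the ``no interference'' requirements; without $x^*_{01}\neq x^*_{11}$, a spatial bump at $x^*_{11}$ would touch the minimizer of $U(\cdot,\ell_{01},a)$, undermining the physical content of the construction and the allosteric interpretation motivating the theorem.
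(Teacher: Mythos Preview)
Your proposal does not address the stated theorem. The statement you were asked to prove is Mather's multijet transversality theorem (Theorem~\ref{multijet}), a standard result from singularity theory which the paper does not prove at all but merely \emph{cites} from \cite{Golubitsky1973}. Your argument is instead a proof of Theorem~\ref{main3}, the allostery result about the structural (in)stability of situations~(\ref{situation1}) and~(\ref{situation2}). These have nothing to do with one another: the multijet theorem is about genericity of $j^k_s f \pitchfork W$ in $C^\infty(X)$, while your write-up concerns minima of a fixed $U$ at four environmental points $\ell_{ij}$.

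Read as an attempt at Theorem~\ref{main3}, your argument has a genuine gap relative to the paper's proof: your perturbation $\varepsilon\,\phi(\ell)\psi(x)\chi(a')$ does not respect the decomposition~(\ref{U012}), since it couples the full $\ell=(\lambda,\rho)$ with the full $x=(x_0,x_1,x_2)$. The paper's proof perturbs only $U_0$ (via a flow along $X_2$, to enforce $\hat x^{01}_2\neq\hat x^{11}_2$) and then only $U_2$ (via a bump on $X_2\times\Rho\times A$), so that the perturbed potential remains of the form $U_0+U_1+U_2$ and the two binding sites stay physically decoupled. This is precisely why the paper needs the refined hypothesis $\hat x^{01}_2\neq\hat x^{11}_2$, not merely $\hat x^{01}\neq\hat x^{11}$. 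In your construction, by contrast, the factor $\phi(\ell)$ already kills the perturbation at $\ell_{00},\ell_{10},\ell_{01}$, so the spatial bump $\psi$ is superfluous and the conformational-change hypothesis is never actually used---your own closing paragraph concedes as much. Thus you prove a weaker statement (perturbations allowed outside the class~(\ref{U012})) under a hypothesis you do not need, whereas the paper proves the physically relevant statement and the hypothesis does real work. Your second half, using the $1$-Lipschitz dependence of $\min$ on $C^0$ perturbations, is correct and somewhat cleaner than the paper's appeal to stability of nondegenerate critical points.
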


We will first prove some lemmas.

\begin{lemma}
	Let $X$ be a compact manifold and $Y$ be a manifold, let $\pi\colon X\times Y \to Y$ be the projection to the second factor. Then for a typical 
	function $f\colon X\times Y \to \mathbb R$, the set of critical points of $f|_{\pi^{-1}(y)}$ is finite for any $y \in Y$.
\label{lemma-finite}
\end{lemma}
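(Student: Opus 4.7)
The plan is to apply Thom's transversality theorem to a natural submanifold of the 1-jet bundle, obtaining the set of fibrewise critical points as a smooth submanifold of dimension $\dim Y$, and then to force each fibre of the projection to be discrete, which together with compactness gives finiteness. Concretely, I would let $W \subset J^1(X \times Y, \mathbb R)$ be the subset cut out by the vanishing of the $X$-partial derivatives of the 1-jet: in any local coordinate chart $(x, y)$ on $X \times Y$ it is given by the $\dim X$ linear equations $\partial f/\partial x_i = 0$ on the fibre, making it a smooth subbundle of codimension $\dim X$. Theorem~\ref{multijet} with $s = 1$ then says that the set of $f \in C^\infty(X \times Y)$ with $j^1 f \pitchfork W$ is residual, and Theorem~\ref{preimage} gives that for any such $f$ the critical set $C := (j^1 f)^{-1}(W) = \{(x,y) \in X \times Y : d_x f(x,y) = 0\}$ is a smooth submanifold of codimension $\dim X$, hence of dimension $\dim Y$.

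Fix $y \in Y$ and set $C_y := C \cap \pi^{-1}(y)$; closedness of $C$ and compactness of $X$ make $C_y$ compact, so finiteness reduces to discreteness. At a critical point $(x_0, y) \in C$ where the $X$-Hessian $\partial^2_x f(x_0, y)$ is invertible, the implicit function theorem realizes $C$ locally as a graph over $Y$, so $(x_0, y)$ is an isolated point of $C_y$. To handle degenerate critical points I would supplement with a 2-jet transversality argument: let $W' \subset J^2(X\times Y, \mathbb R)$ be the stratified subset defined by $d_x f = 0$ together with rank drops of $\partial^2_x f$, with corank-$k$ stratum of codimension $\dim X + k(k+1)/2$. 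Transversality of $j^2 f$ to every stratum makes the degenerate locus a subset of $C$ of strictly positive codimension, and at a corank-1 degenerate critical point the Thom splitting lemma locally writes $f(\cdot, y) = g(x_1) + Q(x_2, \ldots, x_n)$ with $Q$ a nondegenerate quadratic form and $g$ a one-variable function with an isolated critical point (by generic nonvanishing of some higher derivative of $g$), so $x_0$ remains isolated in $C_y$.

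The main obstacle will be this last step: bare 1-jet transversality controls only the dimension of $C$, not the pointwise discreteness of its fibres, and the 2-jet supplement must invoke normal-form / finite-determinacy results from singularity theory to conclude isolation at degenerate critical points. Higher-corank or higher-order degeneracies are constrained by $J^k$-transversality for progressively larger $k$, and the substantive work lies in chaining these transversality arguments to exclude every non-finitely-determined singularity --- rather than merely bounding the dimension of the degenerate stratum --- and thereby turn a dimension count on $C$ into a pointwise finiteness statement on $\pi|_C : C \to Y$.
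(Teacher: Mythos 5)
Your diagnosis of the difficulty is exactly right, and it is the crux: transversality of $j^1 f$ to the fibrewise-critical locus only bounds the dimension of the critical set $C$, and by itself says nothing about discreteness of the individual fibres $C_y$ (a $\dim Y$-dimensional $C$ could a priori project non-injectively, even non-discretely, onto a lower-dimensional subset of $Y$). The completion you sketch --- stratifying by corank and degeneracy order and chaining higher-order jet transversality to exclude non-finitely-determined fibrewise singularities --- is a viable route, but be aware that the ``substantive work'' you defer is in substance the entire theorem, and it is precisely the result the paper invokes as a citation: the set of function germs with a non-isolated critical point has \emph{infinite codimension} (Tougeron, Arnol'd), and a generic finite-parameter family avoids any infinite-codimension subset (Kurland--Robbin). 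The paper's own proof is therefore a three-line reduction: a function on $X\times Y$ is a $(\dim Y)$-parameter family of functions on $X\simeq\pi^{-1}(y)$, so generically every member of the family has only isolated critical points, and compactness of $X$ gives finiteness. Your approach buys self-containedness within the jet-transversality machinery already set up for Theorem~\ref{main}, at the cost of reproving the cited black box; to make your chain actually terminate, note that a single jet order suffices: by Tougeron's finite-determinacy theorem the locus of $k$-jets in $\dim X$ variables (with vanishing first derivatives) that are not $k$-sufficient has codimension growing without bound in $k$, so one fixes $k$ with that codimension exceeding $\dim X+\dim Y$, applies Theorem~\ref{multijet} stratum by stratum to the corresponding stratified bad set $\Sigma_k\subset J^k(X\times Y,\mathbb R)$ of jets whose fibrewise part (the $X$-derivatives only, the image of $j^k f$ under a submersion of jet bundles, so codimensions are preserved) is non-sufficient, and concludes that a generic $f$ misses $\Sigma_k$ entirely, whence every fibrewise critical point is finitely determined, hence isolated, and compactness of $X$ finishes. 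Your opening step via Theorem~\ref{preimage} and the implicit function theorem at nondegenerate points is correct but ends up subsumed by the corank-$0$ stratum of this final argument.
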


\begin{proof}

It is known that the subspace of functions that have only isolated points is of so called \emph{infinite codimension} (see \cite{Arnold1968} and 
\cite{Kurland1975} for definition and explanation and \cite{Arnold1968} and \cite{Tougeron1968} for the proof), a notion that is stronger than being 
typical (the latter implies the former). This property implies that for any $k$, the set of $k$-parameter families of functions with only isolated 
points is residual in the set of all $k$-parameter families. Function $f$ on $X\times Y$ is a ($\dim Y$)-parameter family of functions on $X \simeq 
\pi^{-1}(y)$. Thus, for any $y$, $f|_{\pi^{-1}(y)}$ has only isolated critical points. The finiteness of the number of critical points on every layer 
follows from the compactness of $X$.

\end{proof}

\begin{lemma}
	Let $X$ and $Y$ be manifolds and $S \subset X\times Y$ be a compact submanifold, let $\pi\colon X\times Y \to Y$ be the projection to the second 
	factor, let $\dim S = \dim Y -1$. Suppose that for each $y \in Y$, $\pi^{-1}(y)\cap S$ is finite. Then $\pi|_S$ is regular at a typical point of 
	$S$.
\label{lemma-regular}
\end{lemma}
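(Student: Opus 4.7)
The plan is to show that the set $\Sigma \subset S$ of critical points of $\pi|_S$ is closed with empty interior, so that its complement---the set of regular points---is open and dense in $S$ and hence residual. Since $\dim S = \dim Y - 1 < \dim Y$, a point $p \in S$ is regular for $\pi|_S$ precisely when $d(\pi|_S)_p$ attains the maximal possible rank $\dim Y - 1$, i.e.\ when it is injective. Because the rank of a smooth map is lower semicontinuous in the base point, the locus where $\rank\, d(\pi|_S) = \dim Y - 1$ is open, so $\Sigma$ is closed.

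The core of the argument is to rule out that $\Sigma$ has interior, by contradiction. Suppose some nonempty open $U \subset S$ is contained in $\Sigma$, and let $r^\star = \max_{p \in U}\rank\, d(\pi|_S)_p$, attained at some $p_0 \in U$. Since every point of $U$ is critical, $r^\star \leqslant \dim Y - 2$. Lower semicontinuity of the rank, together with the maximality of $r^\star$, produces an open neighbourhood $V \subset U$ of $p_0$ on which the rank is identically equal to $r^\star$. The constant rank theorem then gives local coordinates on $V$ in which $\pi|_V$ is a linear projection of rank $r^\star$, so each fibre $(\pi|_V)^{-1}(y)$ is a smooth submanifold of $V$ of dimension $\dim S - r^\star \geqslant 1$. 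Picking any $y \in \pi(V)$ gives a positive-dimensional piece of $\pi^{-1}(y) \cap S$, contradicting the hypothesis that this intersection is finite. Hence $\Sigma$ is closed and nowhere dense, so $S \setminus \Sigma$ is open and dense, in particular residual, which is the statement that $\pi|_S$ is regular at a typical point of $S$.

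The one delicate step is the passage to the open subset $V$ on which the rank has stabilised: without first stabilising the rank, the constant rank theorem cannot be applied and the fibres of $\pi|_S$ need not be smooth near a critical point. Everything else---closedness of $\Sigma$, the dimension count, and the identification of an open-dense set with a residual one---is routine. Note that compactness of $S$ plays no role in the argument above; it is presumably used elsewhere in the applications of the lemma or in guaranteeing that the finiteness hypothesis on the fibres has content.
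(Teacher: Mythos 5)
Your proof is correct, and it reaches the same contradiction as the paper's — a locus of non-maximal rank with nonempty interior forces a positive-dimensional piece of some fibre $\pi^{-1}(y)\cap S$ — but by a noticeably different and more economical route. The paper works in local charts, describes the vertical distribution by the pulled-back $1$-forms $\omega_\nu=\iota^*dy_\nu$, stratifies $S$ by the rank of the coefficient matrix $\omega_{\nu\lambda}$ (closedness coming from the vanishing of minors), and then runs an induction over the rank strata, producing on each putative open stratum a nonvanishing kernel vector field whose integral curve, after integrating the $\omega_\nu$ along it, is shown to lie in a single fibre. You replace all of this with two standard facts: lower semicontinuity of the rank (which gives closedness of the critical set $\Sigma$ in one line and, combined with taking the maximal rank $r^\star$ on a hypothetical open $U\subset\Sigma$, stabilises the rank on an open $V$ without any induction), and the constant rank theorem (which hands you the positive-dimensional fibre directly instead of constructing it from an integral curve). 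Your rank-stabilisation step is exactly the delicate point you flag, and it plays the same role as the paper's passage from $\Omega_k$ to the locally constant-rank set $\Theta_k$. Your closing observation that compactness of $S$ is not needed, and that only countability (in fact non-uncountability) of the fibres matters, agrees with the remark the author makes immediately after the proof; the paper uses compactness only to patch together its chart-by-chart argument, which your global formulation avoids.
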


\begin{proof}

The conclusion of the lemma is trivially true for $\dim Y = 1$ ($\dim S = 0$). Therefore, in the following, we will consider $\dim Y > 1$.

Recall that for a manifold $M$ ($\dim M = n$), a smooth association of a point $p \in M$ with a $k$-dimensional subspace in $T_p M$ is called a 
$k$-dimensional \emph{distribution} on $M$ \cite[\S 3]{simrus} (not to be confused with probability distributions). In other words, a distribution 
associates a tangent hyperplane with each point of a manifold. A distribution can be viewed as a subbundle of the tangent bundle. Another way to define 
a distribution is by defining a collection of (at least $k$) vector fields $V_i$ on $M$ that span the corresponding subspace of the distribution at 
each point. Finally, the same distribution can be defined by (at least $n - k$) differential 1-forms $\omega_j$ that annulate $V_i$: $\omega_j(V_i) = 
0$ for each $i$ and $j$. Any distribution can be defined in such way at least locally (in a neighbourhood of each point of $M$).

Now, the regularity of $\pi|_S$ at point $p \in S$ means that $T_p S \cap T_p \pi^{-1}(\pi(p)) = 0$. The association $p \mapsto T_p\pi^{-1}(\pi(p))$ 
defines a distribution $D$ on $X\times Y$, which is vertical towards the projection $\pi$ (it is mapped to the trivial distribution on $Y$ that 
associates $0 \in T_y Y$ with each point of $Y$) and has its layers as integral manifolds (the layers are tangent to the distribution at each point). 
In a local chart around $p \in X \times Y$ with coordinates $(x_\mu,y_\nu)$, the layers of the bundle $\pi$ are defined by the conditions $y_\nu = 
\mathrm{const}$, and thus the distribution is defined by 1-forms $\alpha_\nu = dy_\nu$, where $d$ is the exterior derivative.

The distribution $D$ on $X \times Y$ induces a distribution $D_S$ on $S$ in the following way. Let $\iota\colon S \to X\times Y$ be the inclusion of 
$S$. Then the collection $\{\omega_\nu\}$, $\omega_\nu = \iota^* \alpha_\nu$, defines a distribution on $S$, where $\iota^*$ is the pullback of 
differential forms induced by $\iota$. The dimension of $D_S$, however, can change from point to point depending on the degeneracy of $\{\omega_\nu\}$.

Choose a coordinate neighbourhood $\mathscr{O}$ around $s \in S$ in $S$ with local coordinates $s_\lambda$. Then locally we have $\omega_\nu = 
\sum\limits_\lambda \omega_{\nu\lambda} ds_\lambda$, where $\omega_{\nu\lambda} \in C^\infty(S)$. In these terms, the regularity of $\pi|_S$ at $s$ 
means that the rank of matrix $\omega_{\nu\lambda}$ is maximal at $s$ ($\rank \omega_{\nu\lambda}(s) = \dim S$, as $\dim S < \dim Y$).

Consider the sets
	\begin{align}
		&\Omega_0 = \{\sigma \in \mathscr{O} : \forall \nu_0,\, \omega_{\nu_0} = 0\} = \{\sigma \in \mathscr{O} : \rank \omega_{\nu\lambda}
			\leqslant 0\},\notag\\
		&\Omega_1 = \{\sigma \in \mathscr{O} : \forall \nu_0,\nu_1,\, \omega_{\nu_0} \wedge \omega_{\nu_1} = 0\} = \{\sigma \in \mathscr{O} : \rank 
			\omega_{\nu\lambda} \leqslant 1\},\notag\\
		&\ldots \notag\\
		&\Omega_k = \{\sigma \in \mathscr{O} : \forall \nu_0,\ldots,\nu_k,\, \omega_{\nu_0} \wedge \ldots \wedge \omega_{\nu_k} = 0\} = 
			\{\sigma \in \mathscr{O} : \rank \omega_{\nu\lambda} \leqslant k\},\label{Omega-k}\\
		&\ldots\notag
	\end{align}
where $\wedge$ is the exterior product of differential forms. Note that for each $k \leqslant l$, $\Omega_k \subset \Omega_l$. Note also that trivially 
$\Omega_k = \mathscr{O}$ for all $k \geqslant \dim S$. Consider also sets $\Theta_k$, where $\Theta_0 = \Omega_0$ and $\Theta_k = \Omega_k\setminus \Omega_{k-1}$ 
for $k > 0$. These sets define the points where rank of $\omega_{\nu\lambda}$ is equal to $k$.

Let us prove that $\Omega_k$ are closed and nowhere dense in $\mathscr{O}$ for $k < \dim S$, and thus $\Theta_{\dim S}$ is open and dense in 
$\mathscr{O}$. The closeness of all $\Omega_k$ follows from the fact that the defining equations in (\ref{Omega-k}) are equivalent to a finite set 
$\{F_m = 0\}$ of functional equations on $\omega_{\nu\lambda}$, where $F_m$ are homogeneous polynomials of $\omega_{\nu\lambda}$ of $k$-th order with 
coefficients from $\{-1,1\}$. Indeed, the equations in (\ref{Omega-k}) reflect nothing else but setting to zero all $k$-th minors of 
$\omega_{\nu\lambda}$. As $F_m$ are smooth, the set $\Omega_k = \{\sigma \in \mathscr{O} : F_m = 0\}$ is closed.

Now suppose that $\mathring \Omega_0 \neq \varnothing$, where $\mathring A$ is the interior of $A$. Then $D_S$ has constant dimension $\dim S$ in 
$\mathring \Omega_0$, and the set of equations $\omega_\nu(V) = 0$ has $\dim S$ linearly independent solutions. Choose one such $V$, a point $\sigma 
\in \mathring \Omega_0$, where $V_\sigma \neq 0$, a neighbourhood $\mathscr{O}_\sigma$ of $\sigma$, where $V \neq 0$ and rectifiable (which always 
exists by smoothness of $V$), the integral curve $\gamma$ of this vector field in $\mathscr{O}_\sigma$ that passes through $\sigma$, any $\sigma_1 \in 
\gamma$ different from $\sigma$, and denote $\tilde \gamma \subset \gamma$ the interval of $\gamma$ that connects $\sigma$ and $\sigma_1$. By 
necessity, for an arbitrary such $\sigma_1$ we have
	\begin{equation}
		\forall \nu\quad \intl{\tilde \gamma}{}\omega_\nu = 0,\quad \textrm{and thus}\quad \intl{\iota(\tilde \gamma)}{}dy_\nu = 
		y_\nu\big(\iota(\sigma_1)\big) - y_\nu\big(\iota(\sigma)\big) = 0.
	\end{equation}
The equality $y_\nu\big(\iota(\sigma_1)\big) = y_\nu\big(\iota(\sigma)\big)$ for all $\nu$ and $\sigma_1$ means that $\iota(\gamma) \subset 
\pi^{-1}(\pi(\sigma))$ and thus $S \cap \pi^{-1}(\pi(\sigma))$ is uncountable. This contradicts the premise, therefore $\mathring \Omega_0 = 
\varnothing$, which means that $\Omega_0 = \Theta_0$ is nowhere dense.

Repeat this reasoning in the inductive manner for $\Theta_k$, $0 < k < \dim S$. The only difference at each step is the number of independent vector 
fields that solve $\omega_\nu(V) = 0$, which is equal to $\dim S - k$. In the end of each step $\mathring \Theta_k = \varnothing$ (the proved 
expression) and $\mathring \Theta_{k-1} = \varnothing$ (the expression from the previous step) together imply $\mathring \Omega_k = \varnothing$. The 
induction chain breaks at $\Theta_{\dim S}$, since in this case the aforementioned equations have no nontrivial solutions, and thus $D_S$ is 
0-dimensional in $\Theta_{\dim S}$.

Now select a chart around every point of $S$ and then subselect a finite covering from this collection (which is possible by the compactness of $S$). 
Repeat the reasoning for all of them to get the conclusion of the lemma.

\end{proof}

Note that the requirements of compactness of manifolds and finiteness of $\pi^{-1}(y) \cap S$ are not essential. It is only essential for $\pi^{-1}(y) 
\cap S$ to be at most countable. But this level of generality brings about unneeded complications that are not relevant for the following.

\begin{proof}[Proof of Theorem~\ref{main}]

Let us call a \emph{presolution} to the reduced discrimination problem a phenotype $a$ such that the protein has equal in the energy level critical 
points of energy for $\ell_0$ and $\ell_1$, and not necessarily minima. It is clear that the proper solutions make a subset of the presolutions.

Consider the following diagram, associated with energy functions on $M$:
	\begin{equation}
		\xymatrix{
			\mathbb R \simeq \Delta \mathbb R^2\, \ar[r]^-{\iota_E} & \mathbb R^2 & & & L^2 & \{(\ell_0,\ell_1)\} 
				\ar[l]_-{\iota_L} \\
			& & J^1_2(M,\mathbb R) \ar[ul]_-{\pi_E} \ar[dl]_-{\pi_{T^*M}} \ar[r]^-{\pi_1} &
				M^{(2)} \ar@/^1.5pc/[l]^-{j^1_2 U} \ar[ur]^-{\pi_L} \ar[r]^-{\pi_A} \ar[dr]_-{\pi_X}
				& A^2 & \Delta A^2 \simeq A \ar[l]_-{\iota_A}\\
			P_X^2 \ar[r]^-{\iota_{P_X}} & (T^*M)^2 & & & X^2 & \Delta X^2 \simeq X \ar[l]_-{\iota_X}
		}
	\end{equation}

\noindent Here $\pi_1$ is the projection of $J^1_2(M,\mathbb R)$ as a bundle over $M^{(2)}$. $\pi_X$ is $(p_X \times p_X)|_{M^{(2)}}$, where $p_X$ 
is the natural projection of $M$ on $X$, analogously for $\pi_L$ and $\pi_A$. $\pi_E$ is the projection to pairs of energy values, associated with a 
multijet (it can be seen as $(p_{\mathbb R}\times p_{\mathbb R})|_{J^1_2(M,\mathbb R)}$, where $p_{\mathbb R}$ is the projection to the first factor of 
$\mathbb R \times T^*M \simeq J^1(M,\mathbb R)$). $\iota_X$, $\iota_L$, $\iota_A$, and $\iota_E$ are the obvious natural inclusions (embeddings).

Finally, $P_X$ and $\pi_{T^*M}$ are defined as follows. Let us consider again $J^1(M,\mathbb R)$ as $\mathbb R \times T^*M$ and let $p_{T^*M}$ be the 
natural projection on the second factor. In turn,
	\begin{equation}
		T^*M \simeq T^*X \oplus T^*(L\times A)
	\end{equation}

\noindent where $V\oplus W$ is the Whitney sum of two vector bundles $\pi_V:V \to B$, $\pi_W:W \to B$ over the same base $B$, i. e. the pullback from 
the following commutative diagram ($\iota_\Delta$ is the diagonal inclusion map)
	\begin{equation}
		\xymatrix{
			V\oplus W \ar[r] \ar[d] & V \times W \ar[d]^-{\pi_V \times \pi_W} \\
			B \ar[r]^-{\iota_\Delta} & B\times B 
		}
	\end{equation}
 
Let $0_X$ be the image of the 0-th section of $T^*X$ in $T^*X$. Then we define
	\begin{equation}
		P_X = 0_X \oplus T^*(L\times A),\quad \pi_{T^*M} = (p_{T^*M} \times p_{T^*M})|_{J^1_2(M,\mathbb R)},
	\end{equation}

\noindent and $\iota_{P_X}$ as the natural inclusion $P_X^2 \to (T^*M)^2$

Let us define
	\begin{align}
		&W_1 = (\pi_E)^{-1}(\im\iota_E) \cap (\pi_{T^*M})^{-1}(\im\iota_{P_X})
			\cap (\pi_L\circ\pi_1)^{-1}(\im\iota_L) \cap (\pi_A\circ\pi_1)^{-1}(\im\iota_A) \subset J^1_2(M,\mathbb R),\notag\\
		&W_2 = W_1 \cap (\pi_X\circ\pi_1)^{-1}(\im\iota_X) \subset J^1_2(M,\mathbb R),\notag\\
		&Y_1 = \im j^1_2 U \cap W_1 \subset J^1_2(M,\mathbb R),\notag\\
		&Y_2 = \im j^1_2 U \cap W_2 \subset J^1_2(M,\mathbb R),\notag\\
		&V_1 = \pi_1(Y_1) \subset M^{(2)},\\
		&V_2 = \pi_1(Y_2) \subset M^{(2)},\notag\\
		&\Upsilon_1 = \iota_A^{-1}(\pi_A(V_1)) \subset A,\notag\\
		&\Upsilon_2 = \iota_A^{-1}(\pi_A(V_2)) \subset A.\notag
	\end{align}

The meaning of these sets is the following. Space $J^1_2(M,\mathbb R)$ consists of pairs of jets over two at least somehow distinct points of $M$. The 
preimage of $\Delta \mathbb R^2$ defines pairs of jets that have the same energy values. The preimage of $P_X^2$ defines pairs of jets both of which 
have zero partial derivatives in $X$ (so, the corresponding points in $X$ are critical points for any representatives of these jets). The preimage of 
$(\ell_0,\ell_1)$ defines pairs of jets one of which is over $\ell_0$ and the other one is over $\ell_1$. The preimages of $\Delta A$ and $\Delta X$ 
define pairs of jets that have the same values of $a$ and $x$, correspondingly.

Therefore, $V_1$ corresponds to pairs of tuples $(x_0,\ell_0,a)$ and $(x_1,\ell_1,a)$ ($\ell_i$ are fixed to the values of the problem) such that 
functions $U(\cdot,\ell_i,a)$ have corresponding $x_i$ as critical points and $U(x_0,\ell_0,a) = U(x_1,\ell_1,a)$. $V_2$ corresponds to the same tuples 
but with the additional constraint $x_0 = x_1$. Accordingly, $\Upsilon_1$ corresponds to evolutionary presolutions of the reduced discrimination 
problem, while $\Upsilon_2$ corresponds to such presolutions where the critical points coincide.

By Theorem~\ref{multijet}, for a typical $U$ (from a residual set of all $U \in C^\infty(M)$), we have both $j^1_2 U \pitchfork W_1$ and $j^1_2 U 
\pitchfork W_2$. Indeed, the theorem guaranties that each of the condition is verified on a residual set. Therefore, they both are verified on a 
residual set, as an intersection of two residual sets is residual.

Sets $Y_i$ are compact submanifolds of $J^1_2(M,\mathbb R)$. Indeed, consider $W_i$ and $Y_i$ as subsets of $J^1(M,\mathbb R)^2 \supset J^1_2(M,\mathbb 
R)$. If on the diagram above we replace $J^1_2(M,\mathbb R)$ by $J^1(M,\mathbb R)^2$, $j^1_2 U$ by $j^1 U \times j^1 U$, $\pi_1$ by its nonrestricted 
version, all other projections of the form $\pi_\bullet$ by nonrestricted $p_\bullet \times p_\bullet$, and then repeat the construction of $Y_i$ in 
the same way, they will coincide with $Y_i$ constructed in the old way. Indeed, the only difference could be some additional points on the preimage of 
the diagonal $\pi_1^{-1}(\Delta M^2)$, but since $(p_L\times p_L)^{-1}(\im \iota_L) \cap \Delta M^2 = \varnothing$, we have $Y_i \cap \pi_1^{-1}(\Delta 
M^2) = \varnothing$, hence the equality. As $j^1 U \times j^1 U (M^2)$ is compact due to the compactness of $M^2$, $Y_i$ are compact, too. This 
property conserves upon restriction to $J^1_2(M,\mathbb R)$.

From the transversality properties and from $\pi_1 \circ j^1_2 U = \id_M$, by Theorem~\ref{preimage}, $V_i$ are submanifolds of $M^{(2)}$. These 
submanifolds are compact, too.

Note that
	\begin{align}
		&\codim (\pi_E)^{-1}(\im\iota_E) = 1,\quad \codim (\pi_{T^*M})^{-1}(\im\iota_{P_X}) = 2\dim X,\notag\\
		&\codim (\pi_L\circ\pi_1)^{-1}(\im \iota_L) = 2\dim L,\quad \codim (\pi_A\circ\pi_1)^{-1}(\im\iota_A) = \dim A,\\
		&\codim (\pi_X\circ\pi_1)^{-1}(\im\iota_X) = \dim X,\quad \dim M^{(2)} = 2(\dim X + \dim L + \dim A).\notag
	\end{align}

\noindent Therefore, by Theorem~\ref{preimage} and assuming $Y_i \neq \varnothing$,
	\begin{align}
		&\codim V_1 = \codim W_1 = 1 + 2\dim X + 2\dim L + \dim A,\notag\\
		&\codim V_2 = \codim W_2 = 1 + 3\dim X + 2\dim L + \dim A,\\
		&\dim V_1 = \dim A - 1,\quad \dim V_2 = \dim A - \dim X - 1.\notag
	\end{align}

We already see that if $\dim X > 0$ (the system is minimally flexible), then $\dim V_1 > \dim V_2$. As $V_2 \subset V_1$, we can conclude that points 
of $V_1$ that correspond to coincident critical values are not typical. More specifically, they form a submanifold of codimension $\dim X$. For 
instance, if $\dim A < \dim X + 1$, such points do not exist at all. Otherwise, if $\dim X > 0$, $V_2$ is a negligible subset in $V_1$ in the sense 
that it is nowhere dense and closed.

Note that by construction, $V_i \subset \pi_A^{-1}(\im \iota_A)\subset M^{(2)}$ and $\Upsilon_i$ can be understood as projections of $V_i$ on $A$ from 
$M^{(2)}$, for example as $\Upsilon_i = \pi (V_i)$, where $\pi = p_1 \circ \pi_A\colon M^{(2)} \to A$ and $p_1\colon A \times A \to A$ is the 
projection on the first factor. Unfortunately $\Upsilon_1$ and $\Upsilon_2$ are not in general submanifolds of $A$ due to generic singularities of the 
corresponding projection and generic self-crossings of the images of $\Upsilon_i$. We do not expect them to be even immersed manifolds. In fact, in a 
typical case, they form so called stratified sets of $A$. We will show, however, that typical points of $\Upsilon_1$ do form $(\dim A - 1)$-dimensional 
submanifolds with multiple connectness components.

Consider the fibre bundle $M \to L \times A$, where the projection is the natural projection of $M$ to the corresponding factor. Then $U$ can be 
viewed as a family of potentials $U_{\ell,a}$ on $X$ parametrized by $\ell$ and $a$. By Lemma~\ref{lemma-finite}, for typical $U$, all $U_{\ell,a}$ 
have only finite number of critical points for each pair $(\ell,a)$. Using the diagram
	\begin{equation}
		\xymatrix{
			P_X \ar[r] & T^*M & J^1(M,\mathbb R) \ar[r] \ar[l] & M \ar@/^1.5pc/[l]^-{j^1 U} \ar[r] & L & \{\ell_i\} \ar[l]
		}
	\label{diagram-li}
	\end{equation}
and the same reasoning as in the beginning of the proof, we conclude that for a typical $U$, the sets $V_{\ell_i}$ of all critical points in $X$ of 
$U_{\ell_i}$ at different $a$ (we will call $V_{\ell_i}$ the \emph{critical set} of $U_{\ell_i}$) constitute manifolds in $X\times \{\ell_i\}\times A 
\subset M$ and can be regarded as subsets of $X\times A \simeq X\times \{\ell_i\} \times A$.

Consider a fibre bundle with the natural projection $\xi_A \colon X\times A \to A$ with two functions $U_{\ell_0}$, $U_{\ell_1}$ on $X \times A$ that 
correspond to $U$ at different values of $\ell$. They too have finite number of critical point over every $a \in A$, since these functions are 
restrictions of $U$. If we consider the direct product of these fibre bundles with projection $\xi_A\times \xi_A \colon (X\times A)^2 \to A^2$, 
function $U_{\ell_0}\times U_{\ell_1}$ has $V_{\ell_0}\times V_{\ell_1}$ as its critical set. It is finite over any $(a_0,a_1) \in A^2$, too. 
Therefore, $V_1$ regarded as a submanifold of $X^2\times A\simeq X^2\times \Delta A^2$, is a submanifold of $V_{\ell_0}\times V_{\ell_1}$, and thus 
finite over any point $a \in A$. In other words, only finite number of points from $V_1$ are projected to any $a \in \Upsilon_1$. By 
Lemma~\ref{lemma-regular}, a typical point of $V_1$ (from an open dense subset) is projected regularly. Therefore, $\pi_A|_{V_1}$ is a local immersion 
in a neighbourhood of a typical point with finite preimage. Due to compactness of $V_1$, the set of points of change of the number of preimages and the 
intersection locus of the immersion in regular points are closed nowhere dense sets of $\Upsilon_1$. Therefore, typical points of $\Upsilon_1$ form 
open submanifold of $A$ of dimension $\dim A - 1$. $V_2$ projects to this submanifold as a $(\dim A - \dim X - 1)$-dimensional manifold and thus is 
closed and nowhere dense. After exclusion of all these meager points (singularities of projection, self-intersections, change of number of preimages, 
and $\Upsilon_2$), we are left with an open $(\dim A - 1)$-dimensional submanifold $\tilde \Upsilon$ of the phenotype space $A$, which is dense in 
$\Upsilon_1$.

Finally, let us return to the proper solution of the reduced discrimination problem, that is, when we consider only the parts of $V_1$ that correspond 
to the global minima and only the corresponding parts of $\Upsilon_1$. This will reduce $\Upsilon_1$ to a smaller subset $\hat \Upsilon$, but all the 
conclusions will hold for it, too. Typical points of $\hat \Upsilon$ form an open $(\dim A -1)$-dimensional submanifold of $A$ and the corresponding 
minimum points in $X$ over $\ell_0$ and $\ell_1$ do not coincide.

The only possible complication can come from situations when at least at one of $\ell_i$, $U_{\ell_i,a}$ has multiple global minima. Each part of $V_1$ 
that is projected to the corresponding $a$ guarantees only that each pair of a minimum point at $\ell_0$ and a minimum point at $\ell_1$ do not 
coincide, but it does not preclude a situation, when to the same point $a$, for example, two parts of $V_1$ are projected that correspond to pairs of 
minimum points at $\ell_0$ and $\ell_1$ of the form $(x_0,x_1)$ and $(x_1,x_2)$. In this case we would have two coinciding global minimum points for 
$\ell_0$ and $\ell_1$. However, as two members of the same pair correspond to the same value of the global minima, we have in this case
	\begin{equation}
		U(x_0,\ell_0,a) = U(x_1,\ell_1,a)\quad \textrm{and}\quad U(x_1,\ell_0,a) = U(x_2,\ell_1,a).
	\end{equation}
But all the minima must have the same value of energy, as they are global minima. Therefore, we must have
	\begin{equation}
		U(x_0,\ell_0,a) = U(x_1,\ell_0,a),\; \textrm{and thus,}\; U(x_1,\ell_0,a) = U(x_1,\ell_1,a).
	\end{equation}
It follows that such $a$ belongs to $\Upsilon_2$ and is not in the described submanifold of typical solutions to the reduced discrimination problem. 
This concludes the proof.

\end{proof}

\begin{proof}[Proof of Theorem~\ref{main2}]

Consider the following diagram
	\begin{equation}
	\xymatrix{
		P_X \ar[r] & T^*M & J^1(M,\mathbb R) \ar[r]^{\pi_{1,0}} \ar[l] & J^0(M,\mathbb R) \ar[r]^{\pi_0} &
			M \ar@/^1.5pc/[ll]^-{j^1 U} \ar@/_1.5pc/[l]_-{j^0 U}
	}
	\end{equation}
where $\pi_{1,0}$ is the natural projection $[f]^1_p \mapsto [f]^0_p$ (it forgets about tangency and keeps information only about intersections of 
function graphs) and $P_X$ is as in the proof of Theorem~\ref{main}. Consider the manifold (for generic $U$) $Y \subset J^1(M,\mathbb R)$ defined by 
the intersection of the preimage of $P_X$ and the image of $j^1 U$. Consider also, as before, the set $V$ of critical points of $U$, which is a manifold 
in $M$, $V = (j^1 U)^{-1}(Y)$. Consider now the projection $V_0$ of $Y$ to $J^0(M,\mathbb R)$. It is a manifold of $J^0(M,\mathbb R)$. Indeed, it is 
just $j^0 U(V)$ and $\im j^0 U$ is an embedding of $M$ (it is just the graph of $U$).

Note that $J^0(M,\mathbb R) \simeq \mathbb R \times M$. As before, we can assume that for every $\ell$ and $a$, $U_{\ell,a}$ has only finitely many 
critical points. Therefore, $V_0$ projected to $\mathbb R \times L \times A$ by the natural projection from $\mathbb R \times L \times A$ locally over 
a typical point $(\ell,a)$ looks like a finite set of sections (graphs) of the bundle $\mathbb R \times L \times A \to L \times A$. Nontypical points 
of $L \times A$ constitute a codimension-1 subset, which consists of either intersections of these manifolds or of the singularities of their 
projections. The set $W$ that corresponds to global minima of $U$ is a subset of this union of $(\dim L + \dim A)$-dimensional manifolds that locally 
in a typical point looks like a single such manifold that is regularly projected to $L \times A$.

For a typical $U$ and $\ell$, $U(\cdot, \ell, \cdot)\colon X \times A \to \mathbb R$ is a typical family of functions on $X$ parametrized by $A$. Thus, 
for a typical $a$, $U_{\ell_1,a}$ is Morse function. It has only separate nondegenerate critical points that are all mapped to different values. A 
codimension-1 bifurcations of a typical family (bifurcations that happen on a codimension-1 subset of $A$) include only fold bifurcations and equality 
of the function value for some two critical points. All other bifurcations have codimension greater than 2 and thus those of them that happen to be in 
$\hat \Upsilon$ form a meager set of $\hat \Upsilon$ (due to compactness of $A$, the complement to this set is open and dense in $A$). Therefore, we 
can consider that typical points of $\hat \Upsilon$ that possess the properties stated in Theorem~\ref{main} do not include these higher codimension 
bifurcations. Furthermore, possible codimension-1 bifurcations of the global minimum (and maximum) exclude fold bifurcations. Indeed, during a fold 
bifurcation, a critical point disappears in a collision with a nondegenerate critical point with different Morse index but for the global minimum it is 
impossible unless a third point becomes the new global minimum at the same time (this makes the bifurcation of codimension at least 2). Therefore, the 
only codimension-1 bifurcations of the global minimum are switches of the minimal points (coincidence of minimal values).

Let $\tilde A$ be the open and dense subset of $A$ formed by the complement to the set of bifurcations of codimension greater than 2 for global minima 
of the family $U_{\ell_1,a}$. Let us consider the intersection $W_{\ell_1}$ of the submanifold $\mathbb R \times \{\ell_1\} \times \tilde A \simeq 
\mathbb R \times \tilde A$ of $\mathbb R \times L \times A$ and $W$. From the previous paragraph we conclude that in some neighbourhood of each point 
of $W_{\ell_1}$, $W$ looks either like a graph of some smooth function $\phi_1\colon L \times A \to \mathbb R$ or like a graph of a continuous function 
$(\ell,a) \mapsto \min (\phi_1(\ell,a),\phi_2(\ell,a))$, where $\phi_1$ and $\phi_2$ are two smooth functions that are equal at the point in question. 
Let, for the first case, $s_1 = j^0 \phi_1$ be the corresponding section of the bundle $\pi\colon \mathbb R \times L \times A \to L \times A$, where 
$\pi$ is the natural projection. Let $s_1$ and $s_2$ be the sections corresponding to $\phi_1$ and $\phi_2$ of the second case. Let $I = \{1\}$ and $I 
= \{1,2\}$ for the first and the second cases, correspondingly. Let the local coordinates in $\mathbb R \times L \times A$ be $(E,\ell,a)$.

Let us locally define,for each $(\ell_1,a') \in W_{\ell_1}$, $a' \in \tilde A$, and a corresponding neighbourhood $\mathscr O_{(\ell_1,a')}$ that 
admits the aforementioned representation, functions on $\mathscr U_{a'} = \mathscr O_{(\ell_1,a')} \cap \tilde A$
	\begin{equation}
		f_i\colon a \mapsto dE_{s_i(\ell_1,a)}\,d(s_i)_{(\ell_1,a)}(\varv,0),\quad i \in I,
	\end{equation}
where by $(\varv,0)$ we understand the vector in $T_{\ell_1}L\oplus T_a A \simeq T_{(\ell_1,a)}(L \times A)$ that corresponds to $\varv$. These 
functions are smooth. Let us also define, for the same point, neighbourhood, and representation, the following piecewise smooth function $f$ on 
$\mathscr U_{a'}$. If the representation corresponds to $I = \{1\}$, then we define
	\begin{equation}
		f(a) = f_1(a).
	\end{equation}
If the representation corresponds to $I = \{1,2\}$, then we define
	\begin{equation}
		f(a) = \begin{dcases}
			f_1(a),& \phi_1(a) < \phi_2(a),\\
			f_2(a),& \phi_2(a) < \phi_1(a),\\
			\min(f_1(a),f_2(a)),& \phi_1(a) = \phi_2(a).
		\end{dcases}
	\end{equation}
It is not difficult to see that, by construction, the values of thus defined functions for any $a$ and any pair its intersecting neighbourhoods $U'_a$ 
and $U''_a$ (induced from any two $\mathscr O'_{\ell_1,a}$ and $\mathscr O''_{\ell_1,a}$) agree in $U'_a \cap U''_a$. Thus, function $f$ is globally 
defined on $\tilde A$. Moreover, we can use any its possible representation to study its local behaviour.

By construction, the sign of this function defines the sign of difference between $U_\varw$ and $U_r$ under an infinitesimal separation of $\ell_1$ to 
$\ell_r = \ell_1$ and $\ell_\varw$, when the latter moves along $\varv$. Let us extend $f$ on the whole $A$ by setting $f(a) = -1$ for $a \in 
A\setminus \tilde A$. Then the set $\Omega = \{a \in A : f(a) > 0\}$ is an open subset of $A$. Indeed, it means that the subset $\Kappa = \{a \in A : 
f(a) \leqslant 0\}$ is closed. To show this, let us choose some convergent sequence $\{a_n\}$ in $A$ such that $a_n \in \Kappa$ and $a_n \to a$. First 
note that $A \setminus \tilde A$ is trivially in $\Kappa$. Let $a \in \tilde A$. Starting from some number, all points of $a_n$ lay in some of 
$\mathscr U_a$ with one of the considered representations of $f$. If $I = \{1\}$ for this representation, by the smoothness of functions $\phi_1$ and 
$f_1$ and thus of $f$ in $\mathscr U_a$, we have $f(a_n) \to f(a)$ and $f(a) \leqslant 0$, thus $a \in \Kappa$. If, on the other hand, $I = \{1,2\}$ 
for the selected representation, the situation $f_{1,2}(a) > 0$ is impossible, and we must have either $f_{1,2}(a) \leqslant 0$, and thus $a$ is 
automatically in $\Kappa$, or one of $f_i(a)$ must be greater than 0 and the other one be smaller or equal to 0. Let us for concreteness assume $f_1(a) 
\leqslant 0 < f_2(a)$. But then, starting from sufficiently large number, we must have $f(a_n) = f_1(a)$ and thus $f(a) = f_1(a) \leqslant 0$, so again 
$a \in \Kappa$. Therefore, $\Kappa$ contains all its limit points and thus is closed, from which follows that $\Omega$ is open.

As an open subset of $A$, $\Omega$ is its ($\dim A$)-dimensional submanifold. Let $\check \Upsilon$ be the open $(\dim A - 1)$-dimensional sumbanifold 
of $A$ made of typical points granted by Theorem~\ref{main} ($\check \Upsilon$ is an open dense subset of $\hat \Upsilon$). Then, trivially, $\check 
\Upsilon \pitchfork \Omega$ and thus $\tilde \Upsilon = \check \Upsilon \cap \Omega$ is either empty or a $(\dim A - 1)$-dimensional submanifold of 
$A$. By construction, $\tilde \Upsilon$ consists of typical solutions to the infinitesimal discrimination problem with different minimal points.

\end{proof}

\begin{proof}[Proof of Theorem~\ref{main3}]

\begin{figure}
	\centering
	\begin{tikzpicture}
	    \tikzfading[name=fade in, inner color=transparent!100,outer color=transparent!0]
		\draw[->] (0,0) -- (6,0)node[below]{$X_0\times X_1$};
		\draw[->] (0,0) -- (0,4)node[left]{$X_2$};
		\draw[->] (0,0) -- (-0.5,-0.7)node[below]{$A$};
		\coordinate (a) at (4,2);
		\coordinate (a2) at (0,2);
		\coordinate (b) at (2,2);
		\coordinate (c) at (2,2.5);
		\begin{scope}
			\clip (2,2) circle (1cm);
			\draw[thick] (b) -- (c);
			\foreach \x in {1,1.2,...,3} {
				\draw[->-=(0.57-(\x-2)*(\x-2)/5)] (\x,0) -- (\x,4);
			}
			\fill [white, path fading=fade in] (2,2) circle (1cm);
		\end{scope}
		\node at (a) {$\bullet$};
		\node at (b) {$\bullet$};
		\node at (c) {$\bullet$};
		\draw (2,2) circle (1cm);
		\node at (2.6,0.75) {$\mathscr O$};
		\node at (a) [above right] {$(\hat x^{01},a)$};
		\draw[dashed] (b) -- (3,3.5)node[above right]{$(\hat x^{11},a)$};
		\draw[dashed] (c) -- (1,3.5)node[above]{$g^\epsilon(\hat x^{11},a)$};
		\draw[dotted] (a) -- (a2)node[left]{$\hat x^{01}_2,\hat x^{11}_2$};
	\end{tikzpicture}
		\caption{
			A possible deformation of $U_0$ to recover $\hat x^{01}_2 \neq \hat x^{11}_2$ (see the text). 
		}
\label{fig-deformation}
\end{figure}

Let $\hat x^{ij}$ denote the minimal points of, respectively, $U_{\ell_{ij},\,a}$ (for a typical value of $a$ the minimal points are unique and 
nondegenerate \cite[Propositions 6.3 and 6.13]{Golubitsky1973}). They can be represented as $\hat x^{ij}_{} = (\hat x^{ij}_0,\hat x^{ij}_1,\hat 
x^{ij}_2)$, $\hat x^{ij}_k \in X_k$. Let us also suppose that $\hat x^{01}_2 \neq \hat x^{11}_2$. If not, this can be recovered by an arbitrarily small 
perturbation of $U$ as follows. First recall that if $W$ is a manifold, $\mathscr O$ is any its open subset, $\mathscr C_1$ and $\mathscr C_2$ are any 
its closed subsets such that $\mathscr C_1 \cap \mathscr C_2 = \varnothing$, than there are smooth functions $\phi$ and $\psi$ on $W$ such that
	\begin{align}
		&\phi(p) > 0 \text{ for all } p \in \mathscr O\quad \text{and}\quad \phi(p) = 0 \text{ for all } p \notin \mathscr O;
			\label{function-phi} \\
		&\psi(p) = 0 \text{ for all } p \in \mathscr C_1,\quad \psi(p) = 1 \text{ for all } p \in \mathscr C_2,\quad \text{and}\quad 
			0 < \psi(p) < 1 \text{ for all } p \notin \mathscr C_1 \cup \mathscr C_2.
			\label{function-psi}
	\end{align}
An explicit construction of such functions can be found in \cite[Part 2, p. 13]{nestruev}. 

By the premise, we have $\hat x^{01} \neq \hat x^{11}$. Consider now a neighbourhood $\mathscr O$ of point $(\hat x^{11},a)$ in $X \times A$ such that 
$(\hat x^{01},a) \notin \mathscr O$. Choose an associated with $\mathscr O$ by (\ref{function-phi}) function $\phi$ and a nonzero in $\mathscr O$ 
rectifiable vector field $\varv$ tangent to $T_p X_2$ at each point $p \in X \times A$ (such vector field always exists in a small enough $\mathscr 
O$). Consider now the vector field $\varw = \phi \varv$ and the phase flow $g^t$ generated by $\varw$ (so $q(t) = g^t(p)$ is the solution to the 
differential equation $dq/dt = \varw_q$ with $q(0) = p$). The function $U_0^{(\epsilon)} = U^{}_0 \circ g^\epsilon$ provides the needed deformation of 
$U^{}_0$ (see Figure~\ref{fig-deformation}). The corresponding deformation $U^{(\epsilon)}_{} = U_0^{(\epsilon)} + U_1^{} + U_2^{}$ of $U$ tends to $U$ 
(in the Whitney $C^\infty$ topology) as $\epsilon \to 0$. Note also that $U^{(\epsilon)}$ still verifies (\ref{situation1}), but for any $\epsilon \neq 
0$, $\hat x^{01}_2 \neq \hat x^{11}_2$. From the other hand, if these points are different, this cannot be changed by an arbitrarily small perturbation 
of $U$, as the position of nondegenerate critical points of a function smoothly depends on this perturbation to a certain extent. Therefore, the 
inequality of $\hat x^{01}_2$ and $\hat x^{11}_2$ is typical for a typical $a$ that solves the reduced discrimination problem for $\ell_{01}$ and 
$\ell_{11}$.

Now let (\ref{situation1}) hold for $U$ and $\hat x^{01}_2 \neq \hat x^{11}_2$. Let us denote for brevity $N = X_2 \times \Rho \times A$. Let us also 
denote $p^{ij} = (\hat x^{ij}_2,\rho_2,a)$, $p^{ij} \in N$. Let $\mathscr C_1$ and $\mathscr C_2$ be two closed subset of $N$ such that $p^{11} \in 
\mathring {\mathscr C}_1$ (the interior of $\mathscr C_1$), $X_2 \times \{\rho_1\} \times A \subset \mathring {\mathscr C}_2$, $p^{01} \in \mathring 
{\mathscr C}_2$, and $\mathscr C_1 \cap \mathscr C_2 = \varnothing$ (such subsets always exist as $\hat x^{01}_2$ and $\hat x^{11}_2$ are different). 
Choose a function $\psi$ defined by $\psi \equiv 1$ in $\mathscr C_1$, $\psi \equiv 0$ in $\mathscr C_2$, as in (\ref{function-psi}). Consider a 
deformation $U_2^{(\epsilon)}$ of $U^{}_2$ in the following form $U_2^{(\epsilon)} = U^{}_2 + \epsilon \psi$ and the corresponding deformation of $U$ 
given by $U_{}^{(\epsilon)} = U^{}_0 + U^{}_1 + U_2^{(\epsilon)}$. For any number $\epsilon$ it has the same structure as in (\ref{U012}) and 
$U^{(\epsilon)} \to U$ (in the Whitney $C^\infty$ topology) as $\epsilon \to 0$. However, for an arbitrary small enough $\epsilon \neq 0$, it violates 
(\ref{situation1}) but verifies (\ref{situation2}). The robustness of situation (\ref{situation2}), in turn, again follows from the properties of 
nondegenerate critical points of functions.

\end{proof}

\section*{\large Acknowledgments}

The author thanks Olivier Rivoire for the formulation of the problem and for fruitful discussions, and Cl\'ement Nizak for carefully reading the 
manuscript and for his advices on making it more accessible to readers without strong mathematical background. This work was supported by ProTheoMics 
grant from Paris-Sciences-Lettres University and by ANR-17-CE30-0021-02 RBMPro grant from Agence Nationale de la Recherche.


\begin{thebibliography}{widest entry}

\bibitem{Hammes2009}
Hammes, G.G., Chang, Y.C. and Oas, T.G. ``Conformational selection or induced fit: a flux description of reaction mechanism''. Proceedings of the 
National Academy of Sciences, 2009, 106(33), pp.13737--13741.

\bibitem{Hammes2002}
Hammes, G.G. ``Multiple conformational changes in enzyme catalysis''. Biochemistry, 2002, 41(26), pp.8221--8228.

\bibitem{Koshland1958}
Koshland, D.E. ``Application of a theory of enzyme specificity to protein synthesis''. Proceedings of the National Academy of Sciences, 1958 44(2), 
pp.98--104.

\bibitem{Koshland1995}
Koshland Jr, D.E. ``The key-lock theory and the induced fit theory''. Angewandte Chemie International Edition in English, 1995, 33(23‐24), 
pp.2375--2378.

\bibitem{Monod1965}
Monod, J., Wyman, J. and Changeux, J.P. ``On the nature of allosteric transitions: a plausible model''. J Mol Biol, 1965, 12(1), pp.88-118.

\bibitem{Changeux2011}
Changeux, J.P. and Edelstein, S. ``Conformational selection or induced fit? 50 years of debate resolved''. F1000 biology reports, 2011, 3.

\bibitem{Rivoire2018}
Rivoire, O. ``Minimal evolutionary scenario for the origin of allostery and coevolution patters in proteins''. arXiv:1812.01524

\bibitem{Csermely2011}
Csermely, P., Palotai, R. and Nussinov, R. ``Induced fit, conformational selection and independent dynamic segments: an extended view of binding 
events''. Trends in biochemical sciences, 2011, 35(10), pp.539--546.

\bibitem{Bryngelson1995}
Bryngelson, J.D., Onuchic, J.N., Socci, N.D. and Wolynes, P.G. ``Funnels, pathways, and the energy landscape of protein folding: a synthesis''. 
Proteins: Structure, Function, and Bioinformatics, 1995, 21(3), pp.167--195.

\bibitem{Golubitsky1973}
Golubitsky, M., and Guillemin, V. ``Stable mappings and their singularities''. Springer-Verlag. 1973.

\bibitem{Arnold1968}
Arnold, V. I. ``Singularities of smooth mappings''. In Vladimir I. Arnold --- Collected Works, pp. 119--161. Springer, Berlin, Heidelberg, 1968.
[Arnold, V. I. ``Osobennosti gladkikh otobrazheniy.'' Uspekhi matematicheskikh nauk. 23.1 (139) 1968: 3--44. In Russian]

\bibitem{Kurland1975}
Kurland, H., Robbin J. ``Infinite codimension and transversality''. Dynamical Systems --- Warwick 1974, pp. 135--150. Springer, Berlin, Heidelberg, 
1975.

\bibitem{Tougeron1968}
Tougeron, J. C. ``Id\'eaux de fonctions diff\'erentiables I''. Ann. Inst. Fourier. 1968;18(1):177--240. [In French]

\bibitem{simrus}
Krasil'shchik, I. S., Vinogradov A. M (eds.). ``Symmetries and conservation laws for differential equations of mathematical physics''. American 
Mathematical Society, 1998.
[Bocharov, A. V., et al. ``Simmetrii i zakony sokhraneniya uravneniy matematicheskoy fiziki.'' Moscow, Faktorialpress Publ (2005). In Russian]

\bibitem{nestruev}
Nestruev J. ``Smooth manifolds and observables''. Springer Science \& Business Media, 2006. 

\end{thebibliography}
\end{document}